\documentclass{article}

\usepackage{arxiv}

\usepackage[utf8]{inputenc}
\usepackage[T1]{fontenc}
\usepackage{hyperref}
\usepackage{url}
\usepackage{booktabs}
\usepackage{amsfonts}
\usepackage{amsmath}
\usepackage{amssymb}
\usepackage{amsthm}
\usepackage{nicefrac}
\usepackage{microtype}
\usepackage{graphicx}
\usepackage{doi}
\usepackage{algorithm}
\usepackage{algpseudocode}

\newtheorem{theorem}{Theorem}

\newtheorem{lemma}{Lemma}
\newtheorem{proposition}{Proposition}
\newtheorem{corollary}{Corollary}
\newtheorem{remark}{Remark}

\title{Parent-Hash DAG:\\
A Cost Analysis of Constant-Time Append\\
for On-Chain Registries}

\author{
  \href{https://orcid.org/0009-0001-6031-4066}{\includegraphics[scale=0.06]{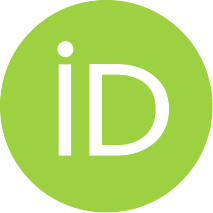}\hspace{1mm}Ian C.\ Moore, PhD}\\
  Principal, DeFiMind\\
  \texttt{imoore@defimind.ai}
  \and
  \href{https://orcid.org/0009-0009-3400-3545}{\includegraphics[scale=0.06]{orcid.pdf}\hspace{1mm}Fernando Paredes Garc\'ia}\\
  Researcher, DeFiMind \\
  \texttt{fparedes@defimind.ai}
}

\renewcommand{\shorttitle}{The Parent-Hash DAG}

\hypersetup{
pdftitle={Parent-Hash DAG: A Cost Analysis of Constant-Time Append for On-Chain Registries},
pdfsubject={cs.DC},
pdfauthor={Ian C. Moore, Fernando Paredes Garcia},
pdfkeywords={blockchain, append-only registry, provenance trees, Merkle tree, parent-hash DAG, gas cost, stochastic analysis, O(1) complexity, Ethereum, provenance},
}

\begin{document}
\maketitle

\begin{abstract}
Provenance trees are append-only directed acyclic graphs of artifact registrations anchored on a public blockchain, recently introduced as the data substrate of operator-gated provenance infrastructure. Their defining data-structural pattern is a parent-hash directed acyclic graph (PHDAG), in which each append performs a constant number of storage writes to previously-untouched slots. This pattern has not previously been isolated as a standalone primitive, formally bounded with explicit constants, or benchmarked against the standard alternative, the incremental Merkle tree (IMT). We formalize PHDAG append as O(1) in gas cost, independent of registry size and tree depth, and develop a stochastic cost model for IMT in which per-insert cost is a random variable over the leaf index, deriving closed-form expressions for its mean and variance. We validate both analyses empirically on Base Sepolia across tree depths 1 to 25. PHDAG is observed to be depth-invariant at 76,276 gas (standard deviation about 6 gas), while IMT cost grows linearly with depth. The crossover below which IMT is cheaper falls far beneath the depths of every production registry surveyed. We further establish trustless registry reconstruction from public event logs in linear time with no off-chain dependency.
\end{abstract}

\keywords{blockchain \and provenance trees \and append-only registry \and Merkle tree \and parent-hash DAG
\and gas cost \and stochastic analysis \and $O(1)$ complexity \and Ethereum}

\section{Introduction}
\label{section:intro}

Provenance trees (PTs) are append-only directed acyclic graphs of
artifact registrations anchored on a public blockchain, recently
introduced as the data substrate of operator-gated provenance
infrastructure for digital artifacts~\cite{Moore26Trustless}. Each node
in a PT carries a globally unique identifier, a parent reference
binding it to its lineage, and a manifest commitment; the resulting
structure is tamper-evident, publicly verifiable, and queryable by
both humans and AI agents. This queryability is not incidental:
a PT functions as a coordination medium in the sense of Paredes
Garc\'ia~\cite{Paredes26}, with downstream agents (verifiers,
derivative-work registries, AI provenance tools) acting on the
traces left by prior registrations rather than via direct message
exchange. PTs are advanced
in~\cite{Moore26Trustless} together with an assertion that the
underlying append operation is $O(1)$, invariant to both registry
size and tree depth, contrasted with the $\Theta(\log N)$ of
Merkle-tree registries. The assertion is stated at a system level,
in the context of a specific cryptographic commitment scheme for
operator-gated attribution. It does not isolate the underlying
data-structural pattern as a standalone primitive, does not
characterize the distribution of per-insert cost, does not state the
constants underlying the complexity classes, and does not establish
the empirical crossover depth at which the pattern becomes cheaper
than its standard alternative, the incremental Merkle tree.

The present paper closes that gap. We isolate the data-structural
nucleus of the PT construction, the parent-hash directed acyclic
graph (PHDAG) primitive, in which each append performs a constant
number of storage writes to previously-untouched slots without
reference to tree geometry, and place it in direct comparison with the IMT primitive
that dominates production zero-knowledge protocols and rollup state
commitments. The IMT, introduced in production settings by Tornado
Cash~\cite{TornadoCash19} and adopted subsequently by Semaphore and
the major zero-knowledge rollup state commitments, maintains a
fixed-depth binary tree in which each append updates a
logarithmic-length frontier. IMT serves in this paper as the
comparator against which the PHDAG primitive's properties are made
legible: depth-dependent cost versus depth-invariant cost, growing
variance versus near-zero variance, succinct inclusion proofs versus
event-log reconstructibility.

The IMT occupies a singular position in the deployed Ethereum
ecosystem. It is the data structure underlying Tornado Cash's
anonymity sets~\cite{TornadoCash19}, Semaphore's zero-knowledge
identity commitments~\cite{Semaphore20}, and the state
commitments of the major zero-knowledge rollups: zkSync~\cite{ZkSync22},
Scroll~\cite{Scroll23}, and Linea~\cite{Linea23}. The pattern's appeal
in these settings is a single property: the $O(\log N)$ Merkle
inclusion proof, which permits a contract or off-chain verifier to
confirm membership of a leaf without reference to the remainder of
the tree. This property is essential to zero-knowledge applications,
where in-circuit verification of membership must complete within a
single transaction's gas budget. The corresponding append cost,
growing logarithmically in tree capacity, is accepted as the price
of admission. For applications where succinct membership proofs are
not the dominant query, the cost trade is less obviously favorable,
and to our knowledge has not been examined.

This paper makes four contributions:

\begin{enumerate}
\item \textbf{Formal complexity analysis.} We formalize the $O(1)$
      append complexity of PHDAG with explicit bounds on the number
      of EVM operations per transaction, and contrast it with the
      $O(d)$ complexity of IMT. We further characterize the
      off-chain reconstruction complexity as $O(|V|)$
      (Section~\ref{section:complexity}).

\item \textbf{Stochastic cost model.} We model per-insert gas cost as
      a random variable over the uniformly distributed leaf index.
      For IMT at depth $d$ we derive closed-form expressions for the
      mean and variance of per-insert cost, obtaining
      $\mathbb{E}[g] = c_0 + \tfrac{d}{2}(c_L + c_R)$ and
      $\mathrm{Var}[g] = (c_L - c_R)^2 d / 4$, where $c_L$ and $c_R$
      are the per-level costs of a write level and a read level under
      the EVM storage-access gas schedule
      (Section~\ref{section:stochastic}).

\item \textbf{Empirical validation.} We deploy independent IMT and
      PHDAG contracts on Base Sepolia across depths $1$--$25$ and
      measure per-insert gas cost for proportional workloads, totaling
      over $500$ transactions (Section~\ref{section:results}).

\item \textbf{Crossover identification.} We empirically locate the
      depth at which PHDAG becomes cheaper than IMT, and situate this
      crossover relative to production-deployed systems. The crossover
      falls below every surveyed production registry, placing all
      such systems in the regime where PHDAG is strictly cheaper
      and exhibits negligible per-insert variance
      (Section~\ref{section:discussion}).
\end{enumerate}

The paper is structured as follows. Section~\ref{section:background}
reviews the two primitives and establishes notation.
Section~\ref{section:complexity} presents the formal complexity
analysis. Section~\ref{section:stochastic} develops the stochastic
cost model. Section~\ref{section:methodology} describes the
experimental setup. Section~\ref{section:results} presents the
empirical results. Section~\ref{section:discussion} discusses the
regime of applicability of each primitive and the implications for
standardization. Section~\ref{section:related} surveys related work
and Section~\ref{section:conclusion} concludes.

\section{Background and Notation}
\label{section:background}

\subsection{The EVM Gas Schedule}
\label{section:gas-schedule}

We work under the EVM gas schedule established by EIP-2929 and
EIP-3529~\cite{EIP2929,EIP3529}, which introduced the
\emph{cold-versus-warm access model}
for storage operations. Under this model, the EVM maintains a
per-transaction access list of storage slots that have been touched
during the transaction's execution. The first access to a slot is
\emph{cold} and is billed at a higher rate; every subsequent access
to the same slot within the same transaction is \emph{warm} and is
billed at a lower rate. The access list is reset at the start of
every transaction.

The schedule applies to both reads (\texttt{SLOAD}) and writes
(\texttt{SSTORE}), with distinct cold and warm prices for each. The
zero-versus-nonzero value-transition cases for \texttt{SSTORE} follow
the net-metering definitions of EIP-2200~\cite{EIP2200}, and the
underlying opcode and receipt semantics follow the Ethereum Yellow
Paper~\cite{YellowPaper}:
\begin{itemize}
  \item \textbf{Cold SSTORE}: $\approx 22{,}100$ gas when writing to
        a storage slot not yet accessed in the current transaction.
  \item \textbf{Warm SSTORE}: $\approx 5{,}000$ gas when writing to
        a slot already accessed in the current transaction.
  \item \textbf{Cold SLOAD}: $\approx 2{,}100$ gas on first access
        in a transaction.
  \item \textbf{Warm SLOAD}: $\approx 100$ gas on subsequent access.
\end{itemize}
Additional fixed costs include the base transaction cost ($21{,}000$
gas), calldata costs ($4$ gas per zero byte, $16$ per non-zero byte),
and event emission overhead.

The critical structural property of this schedule, for the purposes
of this paper, is that the cost of a storage operation depends only
on whether the target slot is cold or warm \emph{within the current
transaction}. It does not depend on the depth of the Merkle Patricia
Trie representing global state, on the size of any contract's
storage, or on the total number of accounts in the network. This is
the property that makes a constant-op contract truly $O(1)$ in gas
cost regardless of how large the registry becomes.

We denote the aggregate fixed overhead per transaction by $\alpha$,
the marginal warm-SSTORE cost by $\beta \approx 5{,}000$, and the
marginal cold-SSTORE cost by $\gamma \approx 22{,}100$.

\subsection{Two Storage Tiers in the EVM}
\label{section:tiers}

The EVM exposes two distinct on-chain storage tiers with different
access, mutability, and cost characteristics. Both are persisted
indefinitely by the network, but they serve different roles in
contract design.

\textbf{Storage slots} are modified via \texttt{SSTORE} and are part
of the world state $\sigma$ maintained by every full node. They are
queryable by other contracts via \texttt{SLOAD}, mutable in
principle (a contract can overwrite a previously written slot), and
expensive ($5{,}000$--$22{,}100$ gas per write under the schedule of
Section~\ref{section:gas-schedule}). Their canonical role is to
carry the protocol's current state in a form directly accessible
from on-chain logic.

\textbf{Event logs} are emitted via the \texttt{LOG} opcodes and are
part of transaction receipts. They are \emph{not} readable by smart
contracts; once emitted, no opcode can modify or delete them, so
they are structurally append-only at the EVM level. Each log is
attested through the receipts root of the block header that
contains its transaction, and is therefore verifiable from chain
headers alone, independent of current world state. Event emission
is substantially cheaper than storage writes (approximately $375$
gas per topic plus calldata-priced data) because the EVM does not
have to support contract-side reads against them.

These two tiers permit a design choice the cost analysis in this
paper turns on: \emph{where does the canonical history of an
append-only registry live?} A storage-canonical design places the
history in slots (typically as a Merkle root commitment), with
events serving as auxiliary notifications. A log-canonical design
places the history in the event stream, with storage slots serving
only as a queryable on-chain index. Both designs are durable; both
are auditable; both are deployable. They differ in cost, in
consumer access pattern, and in what kinds of queries each natively
supports.

The remainder of the paper analyzes one canonical instance of each
choice. The incremental Merkle tree
(Section~\ref{section:imt-primitive}) is a storage-canonical
primitive: the Merkle root in storage is the commitment, and the
event log is auxiliary. The parent-hash directed acyclic graph
(Section~\ref{section:phdag-primitive}) is a log-canonical
primitive: the insertion event carries the complete metadata
required for reconstruction
(Proposition~\ref{prop:reconstruct}), and storage slots serve as a
queryable index supporting append-time integrity checks. The cost
asymmetry between the two primitives, developed formally in
Section~\ref{section:complexity} and empirically in
Section~\ref{section:results}, is a direct consequence of this
architectural difference.

Figure~\ref{fig:architecture} renders this distinction visually.

\begin{figure}[t]
\centering
\includegraphics[width=0.95\linewidth]{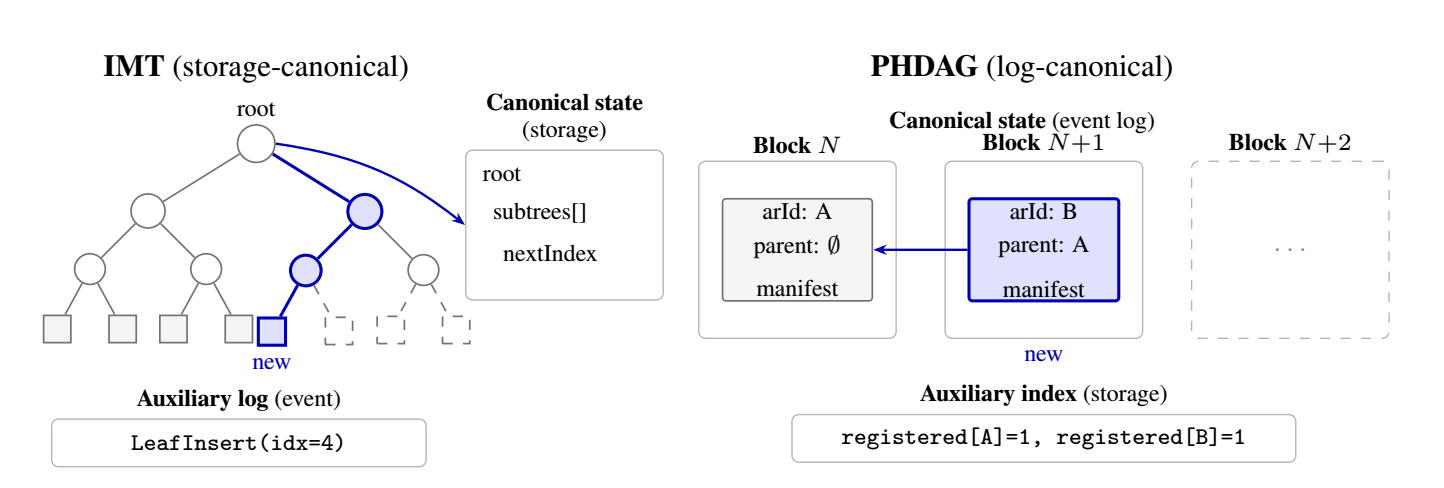}
\caption{Architectural comparison of append-only primitives.
\textbf{Left (IMT, storage-canonical):} an append updates the
frontier path through a fixed-depth binary tree (here $d=3$),
recomputing the root in contract storage. The event log carries an
auxiliary insertion notification. Cost grows with depth.
\textbf{Right (PHDAG, log-canonical):} an append emits a node into
the event log of the block containing its transaction, with parent
references linking nodes across blocks. Contract storage carries
only an index supporting append-time integrity checks. Cost is
constant in depth and in registry size. Accent color marks elements
active during the current append.}
\label{fig:architecture}
\end{figure}

\subsection{The Incremental Merkle Tree Primitive}
\label{section:imt-primitive}

The IMT is a storage-canonical primitive in the sense of
Section~\ref{section:tiers}: the protocol's canonical history is
summarized in a Merkle root persisted to a storage slot, with the
insertion event emitted as an auxiliary notification.
Figure~\ref{fig:architecture} (left panel) shows the structure
visually.

Conceptually, an IMT of depth $d$ is a complete binary tree of
capacity $N = 2^d$ leaves. Internal nodes are hashes of their
children; unpopulated leaves hold a precomputed \emph{zero hash}.
The hash at the root of the tree is a cryptographic commitment to
the entire current state: any change to any leaf changes the root.
This single-value commitment is what makes the IMT useful. A
verifier in possession of the root and a sequence of $d$ sibling
hashes (the \emph{inclusion proof}~\cite{Merkle87}) can confirm
membership of any leaf without reference to the rest of the tree.

An append to an IMT works as follows. A new leaf is placed at the
next unfilled position in leaf-index order, left-to-right. The
contract must then recompute the chain of internal hashes from the
new leaf up to the root. To avoid re-reading the entire tree on
every append, the contract maintains a \emph{frontier}: an array
of length $d$ caching the rightmost filled internal node at each
level. The append walks up the frontier, updating cached nodes as
it goes, and writes the new root to its storage slot.
Figure~\ref{fig:architecture} highlights the frontier path
traversed by a single append in accent color.

The pattern descends from the Merkle tree construction of
Merkle~\cite{Merkle87} and the append-only Merkle log pattern of
Certificate Transparency~\cite{Laurie13}. The fixed-capacity,
frontier-based variant we analyze here was crystallized as a
production Ethereum primitive by the \texttt{MerkleTreeWithHistory}
contract of Tornado Cash~\cite{TornadoCash19}, from which
Semaphore~\cite{Semaphore20} and the
major zero-knowledge rollups~\cite{ZkSync22,Scroll23,Linea23}
inherit the design. We adopt an equivalent minimal reference
implementation in the present work to isolate the primitive's cost
from application-specific overhead.

The cost-relevant property of the IMT is its
\emph{frontier-update pattern}. At each level $k \in \{0, 1, \ldots, d-1\}$ during an
insert at leaf index $i$, the contract writes to
\texttt{filledSubtrees[k]} iff bit $k$ of $i$ is zero; otherwise
it reads \texttt{filledSubtrees[k]} and does not write. In
addition to this per-level conditional write, the contract always
writes to the \texttt{root} and \texttt{nextIndex} slots. This
asymmetric write pattern is the source of IMT's depth-dependent
cost and cost variability.

\subsection{The Parent-Hash DAG Primitive}
\label{section:phdag-primitive}

The PHDAG is a log-canonical primitive in the sense of
Section~\ref{section:tiers}: the protocol's canonical history is
carried in the event stream, with storage slots serving as a
queryable on-chain index that supports append-time integrity
checks. Figure~\ref{fig:architecture} (right panel) shows the
structure visually.

Conceptually, a PHDAG is a directed acyclic graph built
incrementally: each appended node carries a globally unique
identifier and a pointer to a previously-appended parent node,
and the structure grows one node at a time. There is no fixed
geometric shape: nodes may branch arbitrarily, depth is unbounded,
and the parent reference is the only inter-node link. The
structure's integrity comes from the chain of parent references
and from the content commitment (manifest hash) carried by each
node, not from any global root commitment. There is no Merkle
root, and accordingly no succinct membership proof of the kind IMT
provides.

An append to a PHDAG works as follows. The inserting party submits
three values: a globally unique identifier $\mathsf{arId}$, a
parent reference (or the zero value if this is a root node), and a
manifest commitment. The contract writes these into three
mapping-backed storage slots, each keyed by $\mathsf{arId}$, and
emits an insertion event carrying the full metadata. Because
$\mathsf{arId}$ is globally unique by construction, every append
writes to three storage slots that have never been written to
before in the contract's lifetime: three cold SSTOREs. The
storage slots are not where the registry's canonical state lives;
their purpose is to enable the contract to validate, at the next
append, that a referenced parent exists. The canonical state
lives in the event log, distributed across the blocks containing
the registry's inserts. Figure~\ref{fig:architecture} shows two
nodes appended in adjacent blocks, with the parent reference
linking them across the block boundary.

The pattern descends from the parent-hash content-addressed
structures of git~\cite{Git} and IPFS~\cite{IPFS}, both of which
build DAGs in which each node references its parent(s) by content
hash and the structure's integrity is enforced by hash-chain
validation rather than by a global commitment. The PHDAG primitive
analyzed here adapts this pattern to the on-chain setting and was
introduced as the data-structural core of the AnchorRegistry
provenance registry of Moore~\cite{Moore26Trustless}, where each
node carries application-specific metadata not considered in the
present analysis. For our cost analysis we treat the primitive in
its minimal form: three SSTOREs per append to unique keys, plus
one SSTORE to the shared counter slot, plus an insertion event.
The reference contract used for our empirical work implements
exactly this minimal form.

The cost-relevant property of the PHDAG is that each append's
storage writes target previously-untouched slots, with no
dependence on any geometric parameter (depth, capacity, registry
size). This is the property that yields the $O(1)$ append
complexity formalized in Section~\ref{section:complexity}.

\subsection{Notation}

Throughout, we denote:
\begin{itemize}
  \item $n$: the number of anchors currently registered in the registry
  \item $d$: tree depth parameter for IMT, fixed at contract
        deployment. For PHDAG the notion of depth applies
        graph-theoretically to nodes (distance from root along
        parent references) but is not a parameter of the contract
        or of any cost expression in this paper.
  \item $N = 2^d$: IMT capacity
  \item $i$: leaf index of a given insert, uniformly distributed in $[0, N)$
  \item $g$: total gas consumed by a single insert, treated as a random variable
  \item $\alpha$: fixed per-transaction overhead (base tx + calldata + event)
  \item $\beta$: marginal cost per warm SSTORE ($\approx 5{,}000$ gas)
  \item $\gamma$: marginal cost per cold SSTORE ($\approx 22{,}100$ gas)
  \item $c_0$: fixed per-insert overhead for the IMT cost model
        (base tx, calldata, event, and the unconditional
        \texttt{root}/\texttt{nextIndex} writes)
  \item $c_L$: marginal cost of an IMT \emph{write level}
        (left-child level: SSTORE + SLOAD + keccak256 + overhead)
  \item $c_R$: marginal cost of an IMT \emph{read level}
        (right-child level: SLOAD + keccak256 + overhead)
  \item $W(i,d)$: number of write levels for an insert at leaf index
        $i$, equal to $d - \mathrm{pop}(i)$ (Lemma~\ref{lemma:writes})
  \item $\mathrm{pop}(x)$: the Hamming weight (popcount) of $x$
  \item $|V|$: the number of nodes in a provenance tree $V$
\end{itemize}

\subsection{The Ledger as Coordination Medium}
\label{section:coordination-medium}

A property orthogonal to the cost analysis but central to the role
of append-only registries in practice is the function of the ledger
as a \emph{coordination medium}. Paredes
Garc\'ia~\cite{Paredes26} develops this perspective formally as
\emph{ledger-state stigmergy}: autonomous agents (bots, indexers,
downstream contracts, AI tools) coordinate not by exchanging
messages but by reading shared ledger state and reacting to traces
left by prior actions. Smart contracts in this reading are
coordination artifacts embedded in the medium; each append leaves a
persistent, cryptographically authenticated, globally visible trace
that subsequent agents may act upon. The framework identifies three
recurring patterns for such coordination: \emph{State-Flag} (agents
poll storage variables), \emph{Event-Signal} (agents subscribe to
event logs), and \emph{Threshold-Trigger} (agents act when
quantitative state variables cross a boundary). We adopt this
vocabulary throughout: the reconstruction property of PHDAG
(Section~\ref{section:complexity}) is presented as a structural
guarantee, and its coordination-medium consequences are developed
in Section~\ref{section:discussion}.

\section{Formal Complexity Analysis}
\label{section:complexity}

We begin by establishing the structural complexity of each primitive,
independent of any stochastic or empirical refinement. The results of
this section are statements about the \emph{number} of EVM operations
executed per append, from which gas-cost asymptotics follow under the
fixed-per-opcode pricing of EIP-2929/3529.

\subsection{PHDAG Append is $O(1)$}

\begin{lemma}[Bounded On-Chain Operations]
\label{lemma:bounded}
The PHDAG append function executes a fixed, bounded number of EVM
operations, independent of the registry size $n$ and of any notion
of tree depth.
\end{lemma}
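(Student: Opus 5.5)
The plan is to argue directly from the structure of the minimal reference contract described in Section~\ref{section:phdag-primitive}. The key observation is that its append function has \emph{straight-line control flow}: it contains no loops, no recursion, and no data-dependent iteration. The proof therefore reduces to enumerating the operations on the single execution path, plus the short rejection branches, and bounding each piece by a constant that does not depend on $n$ or on any depth.

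I will proceed in four steps.

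\textbf{Step 1: fix the append procedure.} I will write down the function as a sequence of phases:
\begin{enumerate}
  \item calldata decoding of the three fixed-width arguments $(\mathsf{arId}, \text{parent}, \text{manifest})$;
  \item a uniqueness check, consisting of one \texttt{SLOAD} of the slot keyed by $\mathsf{arId}$;
  \item a parent-existence check, consisting of one \texttt{SLOAD} of the slot keyed by the parent, skipped when the parent is the zero value;
  \item three \texttt{SSTORE}s to the mapping slots keyed by $\mathsf{arId}$;
  \item one \texttt{SLOAD}/\texttt{SSTORE} pair on the shared counter;
  \item one \texttt{LOG} emission with a fixed number of topics and a fixed-size data payload.
\end{enumerate}

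\textbf{Step 2: bound each phase by a constant.} Each mapping slot address is computed by a single \texttt{keccak256} over a fixed 64-byte input. Its cost is therefore constant and unrelated to how many keys the mapping already holds, because the mapping is not traversed or stored as a structure the contract walks. All arguments are fixed-width, so calldata size and ABI decoding cost are constant. Summing the phases gives an explicit constant bound $K$ on the total opcode count. The bound is the same whichever branch is taken, since each branch is itself loop-free.

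\textbf{Step 3: check that no opcode's operand or trip count depends on $n$ or depth.} The only quantity that grows with $n$ is the value held in the counter slot. It enters the computation only as a 256-bit word being incremented, so it affects neither control flow nor memory expansion. The graph-theoretic depth of the new node never appears at all, because the contract checks only the immediate parent's existence and never walks ancestors.

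\textbf{Main obstacle.} The delicate step is making precise the claim that operation count is independent of registry size at the semantic level, not merely at the level of source code. Two things must be ruled out. First, hidden iteration: Solidity mappings compile to hashing, not search, and this needs to be stated explicitly. Second, state-dependent costs: the Merkle Patricia Trie is traversed by the client, not by EVM opcodes, and the property recalled in Section~\ref{section:gas-schedule} shows it does not affect the billed gas either. I would handle both by pointing to the compiled opcode sequence and invoking the EIP-2929/3529 pricing model. The lemma itself concerns operation counts, and the translation to gas is left to the subsequent theorem.
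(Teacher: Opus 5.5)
Your proposal is correct and follows essentially the same route as the paper: enumerate the loop-free phases of the append (input handling, at most two \texttt{SLOAD}s on the \texttt{registered} mapping with the root short-circuit, three mapping \texttt{SSTORE}s, the counter read/write pair, one event emission) and sum per-phase constants into a bound independent of $n$ and of depth. Your additional remarks do not change the argument, but they make explicit two points the paper handles elsewhere (Section~\ref{section:gas-schedule} and Lemma~\ref{lemma:nodep}): that mapping access is a single fixed-size \texttt{keccak256} rather than a search, and that the state trie is walked by the client rather than by EVM opcodes.
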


\begin{proof}
The append function executes exactly the following steps:
\begin{enumerate}
  \item Input validation (length and non-empty checks on the
        identifier and manifest commitment): $O(1)$.
  \item A bounded number of lookups into the \texttt{registered}
        mapping: one \texttt{SLOAD} to confirm the new identifier is
        unused, and, for a non-root append, one further \texttt{SLOAD}
        to confirm the referenced parent exists (a root append, whose
        parent reference is the zero value, skips the second by
        short-circuit evaluation). At most two \texttt{SLOAD}s, each
        with constant gas cost.
  \item For the general (non-root) append, three \texttt{SSTORE}
        operations writing the identifier, parent reference, and
        manifest commitment to slots keyed by the unique
        $\mathsf{arId}$. When the parent reference is the zero value
        (a root append, the case measured in
        Section~\ref{section:results}), the write to
        \texttt{parents[arId]} stores zero to an already-zero slot and
        is billed at the no-op rate rather than as a zero-to-nonzero
        set, leaving two cold sets ($2\gamma$) plus one zero-store;
        a non-root append writes a nonzero parent and incurs the third
        set ($3\gamma$). Either way the count is a fixed constant
        independent of $n$ and of depth.
  \item One \texttt{SSTORE} updating the monotonic counter, preceded
        by one \texttt{SLOAD} of the counter slot: a single read and
        a single warm write at $\beta$ gas.
  \item Emission of one insertion event with a fixed number of
        topics and a bounded data payload.
\end{enumerate}
Each of the five steps performs a fixed number of EVM operations:
step 1, a constant number of validation opcodes; step 2, at most two
SLOADs; step 3, three SSTOREs (two cold sets and one zero-store for a
root append, three cold sets for a non-root append); step 4, one SLOAD and one
SSTORE; step 5, exactly one event emission. Thus no step contains
a loop, iteration, or traversal whose count varies with $n$ or with
any property of the registry. The total opcode count is therefore
the sum of these per-step constants, which is itself a constant
$c_1$ independent of $n$ and of any depth. \qed
\end{proof}

The insertion event in step 5 above is not auxiliary to the storage
writes in steps 3 and 4; under the log-canonical design of
Section~\ref{section:tiers}, it carries the canonical metadata that
the storage slots merely index. Both are necessary, and both are
constant-cost.

\begin{lemma}[Registry-Size Invariance]
\label{lemma:nodep}
The gas cost of a PHDAG append is invariant to the global registry
size $n$. Formally, treating the per-append storage gas $G(n)$ as
a function of the registry's pre-append cardinality $n \in \mathbb{N}_0$,
\[
  \frac{\partial G}{\partial n} \;=\; 0.
\]
\end{lemma}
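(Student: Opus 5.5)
The plan is to build directly on Lemma~\ref{lemma:bounded}. That lemma fixes the sequence of EVM operations executed by an append, so the remaining task is to show that the gas charged for each of those operations does not depend on $n$. I will write the per-append storage gas as a sum over the operations enumerated in the proof of Lemma~\ref{lemma:bounded}:
\[
G(n) \;=\; \sum_{j=1}^{c_1} \mathrm{gas}(\mathrm{op}_j;\, n).
\]
I will then show that each summand is a constant function of $n$. Since $n$ ranges over $\mathbb{N}_0$, the statement $\partial G/\partial n = 0$ is read as the forward difference $G(n+1)-G(n)=0$, or equivalently as $G$ being constant. I will state this interpretation explicitly at the start.

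For each operation I will use the pricing rule from Section~\ref{section:gas-schedule}: the charge for an \texttt{SLOAD} or \texttt{SSTORE} depends only on three things.
\begin{enumerate}
\item Whether the slot is cold or warm within the current transaction.
\item For \texttt{SSTORE}, the original and new values of the slot, following the EIP-2200 transition classes.
\item Nothing about trie depth or contract storage size.
\end{enumerate}

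The operations then split into cases.
\begin{itemize}
\item \textbf{Identifier-keyed slots.} The mapping slots keyed by $\mathsf{arId}$ (the \texttt{registered} check and the three writes) are always cold, because the access list is reset per transaction. They are always previously zero, because $\mathsf{arId}$ is globally unique, so this is the first write to them in the contract's lifetime. Hence each of these writes falls in the same zero-to-nonzero class for every $n$, or the zero-to-zero class for the root parent write, and is billed at $\gamma$ or at the no-op rate independently of $n$.
\item \textbf{Parent-existence \texttt{SLOAD}.} This read is cold and priced at a constant rate whatever the stored value.
\item \textbf{Counter.} The slot is read and then written, so the write is priced as warm. Its transition is nonzero-to-nonzero for every $n \ge 1$.
\item \textbf{Event emission and validation.} Event cost depends only on the number of topics and the payload length, both of which are fixed by the input format. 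Validation opcodes have fixed prices.
\end{itemize}

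The main obstacle is the counter at the boundary. At $n=0$ the counter goes from zero to nonzero, which is billed as a set rather than a reset, so strict invariance across all of $\mathbb{N}_0$ fails at that one point. I would handle this in one of two ways: state the result for $n \ge 1$ and treat $n=0$ as a one-time initialisation, or assume the counter is initialised to a nonzero value at deployment. I would also note that calldata cost is a function of the input bytes, not of $n$, and that sums of $n$-independent terms are $n$-independent. Together these give $G(n+1)=G(n)$ and hence the claim.
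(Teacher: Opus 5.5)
Your proposal is correct and follows the same route as the paper's proof. Both fix the operation sequence, then argue that each operation's price is set only by the per-transaction access list (reset to $\emptyset$ at the start of every transaction) and by the freshness of the $\mathsf{arId}$-keyed slots. Neither of those depends on $n$.

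Where you go further is in tracking the EIP-2200 value-transition class of the counter \texttt{SSTORE}, and this catches something the paper's proof misses. The paper considers value transitions only for the $\mathsf{arId}$-keyed slots. It prices the counter write uniformly at $\beta$ and concludes $\partial G/\partial n = 0$ for all $n \in \mathbb{N}_0$. At $n = 0$, however, the counter goes from zero to nonzero and is billed as a set rather than a reset, so invariance genuinely fails at that one point. Your restriction to $n \ge 1$ (or a nonzero initial counter) is therefore the correct form of the statement.

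The paper's own data support your exception. The excluded first append costs $93{,}379$ gas, about $17{,}100$ above the steady-state $76{,}276$. That matches the warm-slot set-versus-reset gap $20{,}000 - 2{,}900$. The paper attributes this to a generic ``cold-initialization cost'' without recognizing that it contradicts the lemma at $n=0$.

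One harmless imprecision on your side: the write to \texttt{registered[arId]} is warm, not cold, because the duplicate-check \texttt{SLOAD} has already touched that slot. Its classification is still the same for every $n$, so your argument is unaffected.
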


\begin{proof}
The proof proceeds in four parts: a formal model of the contract's
storage footprint, a formal model of the EVM's per-transaction
access list under EIP-2929/3529, a step-by-step execution trace of
the append, and the resulting invariance.

\textbf{(i) Contract storage footprint.}
The persistent state of the deployed PHDAG contract is modeled as a
key-value mapping $\sigma : \mathcal{K} \to \mathcal{V}$, where
$\mathcal{K} = \mathcal{V} = \{0,1\}^{256}$. For a registry of size
$n$, the occupied storage footprint $\mathcal{O}(n) \subset \mathcal{K}$
partitions into four disjoint subsets:
\[
  \mathcal{O}(n) \;=\; S_{\text{reg}}(n) \,\cup\, S_{\text{par}}(n) \,\cup\, S_{\text{man}}(n) \,\cup\, \{k_{\text{ctr}}\},
\]
where $S_{\text{reg}}, S_{\text{par}}, S_{\text{man}}$ are the
slot sets for the \texttt{registered}, \texttt{parents}, and
\texttt{manifests} mappings respectively, each of cardinality $n$,
and $k_{\text{ctr}}$ is the fixed slot holding the monotonic
counter \texttt{count}. Thus $|\mathcal{O}(n)| = 3n + 1$, growing
linearly with $n$.

For a mapping declared at storage position $p$, the address of the
entry under key $\kappa$ is given by the Solidity storage layout
specification:
\[
  k(\kappa, p) \;=\; \mathrm{keccak256}(\kappa \,\|\, p).
\]

\textbf{(ii) Per-transaction access list.}
Under the EIP-2929/3529 gas schedule (Section~\ref{section:gas-schedule}),
each transaction maintains a transient \emph{access list}
$\mathcal{A}_t \subset \mathcal{K}$ recording the slots touched at
or before execution step $t$. The cost of a write to slot $k$ at
step $t$ is:
\[
  \mathrm{Cost}_{\mathrm{SSTORE}}(k, \mathcal{A}_t) \;=\;
  \begin{cases}
    \gamma & k \notin \mathcal{A}_t \quad \text{(cold)} \\
    \beta  & k \in \mathcal{A}_t \quad \text{(warm)}
  \end{cases}
\]
with $\gamma \approx 22{,}100$ and $\beta \approx 5{,}000$ gas. The
cost of a read (SLOAD) at slot $k$ has the analogous form with
constants $\gamma_{\mathrm{R}} \approx 2{,}100$ and
$\beta_{\mathrm{R}} \approx 100$. After any access at slot $k$, the
list updates monotonically: $\mathcal{A}_{t+1} = \mathcal{A}_t \cup \{k\}$.
Critically, $\mathcal{A}_0 = \emptyset$ at the start of every
transaction: the access list does not persist across transactions
and does not depend on $\mathcal{O}(n)$.

\textbf{(iii) Execution trace.}
Let the append be invoked with a fresh identifier $\kappa^*$
and parent reference $\kappa_{\mathrm{par}}$. By construction
(Section~\ref{section:phdag-primitive}), $\kappa^*$ has not been
used in any prior append, so
$k(\kappa^*, p_{\mathrm{reg}}), k(\kappa^*, p_{\mathrm{par}}), k(\kappa^*, p_{\mathrm{man}}) \notin \mathcal{O}(n)$
for the respective mapping positions $p_{\mathrm{reg}}, p_{\mathrm{par}}, p_{\mathrm{man}}$.
The contract executes the following storage-touching steps:
\begin{enumerate}
  \item[$t=0$] \textbf{Duplicate check.} SLOAD on
        $k(\kappa^*, p_{\mathrm{reg}})$ to confirm the new identifier
        is unused. This slot is not in $\mathcal{A}_0 = \emptyset$,
        so the access is cold, costing $\gamma_{\mathrm{R}}$.
        $\mathcal{A}_1 = \{k(\kappa^*, p_{\mathrm{reg}})\}$.
  \item[$t=1$] \textbf{Parent existence check.} SLOAD on
        $k(\kappa_{\mathrm{par}}, p_{\mathrm{reg}})$. For a non-root
        append this slot is not in $\mathcal{A}_1$, so the access is
        cold, costing $\gamma_{\mathrm{R}}$; for a root append
        ($\kappa_{\mathrm{par}} = 0$) the check is skipped by
        short-circuit evaluation and this SLOAD does not occur. We
        trace the non-root case, which is the general one.
        $\mathcal{A}_2 = \mathcal{A}_1 \cup \{k(\kappa_{\mathrm{par}}, p_{\mathrm{reg}})\}$.
  \item[$t=2$] \textbf{Write \texttt{registered[arId]}.} SSTORE on
        $k(\kappa^*, p_{\mathrm{reg}})$. This slot was warmed by the
        duplicate check at $t=0$, but holds the zero value until now;
        the write is a zero-to-nonzero allocation billed at the cold
        SSTORE rate $\gamma$.
        $\mathcal{A}_3 = \mathcal{A}_2$ (slot already present).
  \item[$t=3$] \textbf{Write \texttt{parents[arId]}.} SSTORE on
        $k(\kappa^*, p_{\mathrm{par}})$. Different mapping position,
        so the keccak256 output is distinct; not in
        $\mathcal{A}_3$. For a root append ($\kappa_{\mathrm{par}} = 0$,
        the case measured in Section~\ref{section:results}) this stores
        zero to an already-zero slot: a zero-to-zero no-op billed at
        the warm/no-op rate, not a cold allocation. For a non-root
        append the value written is nonzero, giving a cold allocation
        $\gamma$. We carry the non-root cost $\gamma$ as the general
        case; the root case substitutes the smaller zero-store cost.
        $\mathcal{A}_4 = \mathcal{A}_3 \cup \{k(\kappa^*, p_{\mathrm{par}})\}$.
  \item[$t=4$] \textbf{Write \texttt{manifests[arId]}.} SSTORE on
        $k(\kappa^*, p_{\mathrm{man}})$. Distinct from prior slots
        by the same argument; not in $\mathcal{A}_4$. Cold
        allocation: $\gamma$.
        $\mathcal{A}_5 = \mathcal{A}_4 \cup \{k(\kappa^*, p_{\mathrm{man}})\}$.
  \item[$t=5$] \textbf{Read counter.} SLOAD on $k_{\mathrm{ctr}}$
        (compiled from \texttt{count += 1}). Not in $\mathcal{A}_5$,
        so cold: $\gamma_{\mathrm{R}}$.
        $\mathcal{A}_6 = \mathcal{A}_5 \cup \{k_{\mathrm{ctr}}\}$.
  \item[$t=6$] \textbf{Write counter.} SSTORE on $k_{\mathrm{ctr}}$.
        Now in $\mathcal{A}_6$ (warmed by the immediately preceding
        SLOAD). Warm: $\beta$.
\end{enumerate}

\textbf{(iv) Invariance.}
The total storage gas of the append is
\[
  G(n) \;=\; \underbrace{3\gamma_{\mathrm{R}}}_{\text{3 cold SLOADs}} \;+\; \underbrace{3\gamma}_{\text{3 cold SSTOREs}} \;+\; \underbrace{\beta}_{\text{1 warm SSTORE}}
\]
for a non-root append, with the root case omitting one SLOAD
($2\gamma_{\mathrm{R}}$ in place of $3\gamma_{\mathrm{R}}$) and
replacing one cold set with a zero-store (two cold sets in place of
three, since the zero parent reference is stored to an already-zero
slot).
Each term is a protocol constant; none is a function of $n$. The
access list $\mathcal{A}_t$ at each step is a function of the
transaction's own prior steps, not of $\mathcal{O}(n)$, because
$\mathcal{A}_0 = \emptyset$ and the only insertions into $\mathcal{A}$
during the append are the slots the append itself touches. The
cold/warm classification of each step is therefore determined by
the transaction's internal structure and is independent of $n$.
Consequently,
\[
  \frac{\partial G}{\partial n} \;=\; 0, \qquad \forall n \in \mathbb{N}_0.
\]
The remaining per-transaction costs (base transaction, calldata,
opcode execution outside storage, event emission) are functions of
the transaction's input length and opcode sequence, both fixed for
a given append, and are likewise independent of $n$. The total
cost $g_{\mathrm{PHDAG}}$ is therefore a constant function of $n$.
\qed
\end{proof}

\begin{theorem}[$O(1)$ Gas Complexity of PHDAG Append]
\label{thm:phdag}
The gas cost $g_{\mathrm{PHDAG}}$ of a PHDAG append satisfies
\[
g_{\mathrm{PHDAG}} \;\leq\; C \qquad \text{for some constant } C > 0,
\]
and in particular $g_{\mathrm{PHDAG}} = O(1)$ with respect to both
the global registry size $n$ and any notion of tree depth.
\end{theorem}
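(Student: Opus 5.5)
The plan is to assemble Theorem~\ref{thm:phdag} directly from Lemma~\ref{lemma:bounded} and Lemma~\ref{lemma:nodep}, decomposing the total gas of an append as
\[
g_{\mathrm{PHDAG}} \;=\; G_{\mathrm{base}} + G_{\mathrm{calldata}} + G_{\mathrm{exec}} + G_{\mathrm{stor}} + G_{\mathrm{log}},
\]
and exhibiting an explicit constant bound for each term. The storage term is already handled: by part (iv) of the proof of Lemma~\ref{lemma:nodep}, $G_{\mathrm{stor}} \le 3\gamma_{\mathrm{R}} + 3\gamma + \beta$. The non-root case dominates the root case, so this bound covers both. The base term is the fixed $21{,}000$.

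For the execution term I will use Lemma~\ref{lemma:bounded}. The append executes at most $c_1$ opcodes, and under EIP-2929/3529 every non-storage opcode has a fixed price. So $G_{\mathrm{exec}} \le c_1 \cdot p_{\max}$, where $p_{\max}$ is the largest fixed opcode price appearing in the straight-line code. This includes the keccak256 computations of mapping slots, whose inputs have fixed length $64$ bytes.

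The step I expect to be the main obstacle is bounding $G_{\mathrm{calldata}}$ and $G_{\mathrm{log}}$. Both depend on input length. The identifier and manifest commitment are validated by length checks in step 1 of Lemma~\ref{lemma:bounded}, so I will impose the reading that inputs are fixed-width (\texttt{bytes32}) or bounded by a maximum length $L$. Calldata is then at most $16 \cdot (4 + 3\cdot 32)$ gas in the fixed-width case, or at most $16(4+L')$ for a bounded encoding of length $L'$. The event has a fixed number of topics $\tau$ and data of at most $D$ bytes, giving $G_{\mathrm{log}} \le 375 + 375\tau + 8D$, plus memory-expansion cost bounded by a function of $D$. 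If inputs were unbounded strings, the cost would depend on input length; it would still be independent of $n$ and depth, which is all the theorem claims. I will therefore state the $O(1)$ claim as uniform in $n$ and depth, for a fixed input-size bound.

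Summing the five bounds gives a constant $C$ depending only on protocol gas constants and the input-size bound. No term involves $n$: this holds by Lemma~\ref{lemma:nodep}, and for the non-storage terms because they are functions of the fixed opcode sequence and inputs. No term involves depth either, since the contract has no depth parameter and no loop. Hence $g_{\mathrm{PHDAG}} \le C$ and $g_{\mathrm{PHDAG}} = O(1)$.
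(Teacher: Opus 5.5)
Your proposal is correct and follows essentially the same route as the paper. The paper's proof combines Lemma~\ref{lemma:bounded} (a constant operation count $c_1$) with Lemma~\ref{lemma:nodep} (per-operation cost independent of $n$) and notes that depth-invariance holds vacuously; your five-term decomposition is that argument with the bookkeeping written out, and it also makes explicit a step the paper leaves implicit. The paper goes from ``independent of $n$'' to ``bounded by a constant proportional to $c_1$'' without saying that per-opcode prices and input sizes are bounded, and your fixed-width reading of the inputs (supported by the \texttt{bytes32} event signature in Proposition~\ref{prop:reconstruct}) is what closes that step.
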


\begin{proof}
By Lemma~\ref{lemma:bounded}, the number of EVM operations per append
is bounded by a constant $c_1$. By Lemma~\ref{lemma:nodep}, the gas
cost of each such operation is independent of $n$. Therefore the
total gas cost is bounded by a constant $C$ proportional to $c_1$.
PHDAG has no tree-depth parameter, so depth-invariance holds vacuously.
\qed
\end{proof}

Writing out the constant explicitly:
\begin{equation}
g_{\mathrm{PHDAG}} \;\approx\; \alpha_{\mathrm{PHDAG}} + 3\gamma + 3\gamma_{\mathrm{R}} + \beta,
\label{eq:phdag}
\end{equation}
for a non-root append, where $\alpha_{\mathrm{PHDAG}}$ bundles the
base transaction cost, calldata, event emission, and input
validation; $3\gamma$ is the three cold SSTOREs writing to unique
keys; $3\gamma_{\mathrm{R}}$ is the three cold SLOADs of
Lemma~\ref{lemma:nodep} (duplicate check, parent existence check,
counter read); and $\beta$ is the warm SSTORE to the counter. The
root append benchmarked in Section~\ref{section:results} differs in
two fixed terms (Lemma~\ref{lemma:nodep}): it omits the
parent-existence SLOAD ($2\gamma_{\mathrm{R}}$ in place of
$3\gamma_{\mathrm{R}}$), and its write to \texttt{parents[arId]}
stores the zero parent reference to an already-zero slot, a
zero-to-zero no-op in place of a third cold set ($2\gamma$ in place
of $3\gamma$). The expression above is therefore the count of
structural operations, not a closed-form gas total: the per-operation
prices $\gamma$, $\gamma_{\mathrm{R}}$, $\beta$ are upper nominal
figures, and the value-transition cases (a slot pre-warmed by an
earlier SLOAD, a zero-to-nonzero set versus a zero-store) are priced
distinctly under EIP-2200/2929/3529. We do not propagate these into
a hand-derived constant; the per-append constant is read directly
from the measurement in Section~\ref{section:results}. What the
expression establishes is the property that matters for
Theorem~\ref{thm:phdag}: a fixed set of storage operations,
independent of $n$ and of depth, hence a fixed gas cost.

\subsection{IMT Append is $O(d)$}

By the structural write pattern described in
Section~\ref{section:background}, IMT append executes a loop of
length $d$, with each iteration performing either an \texttt{SLOAD}
or an \texttt{SSTORE} depending on the leaf index parity. The number
of EVM operations per append therefore scales linearly in $d$.

\begin{proposition}[$O(d)$ Gas Complexity of IMT Append]
\label{prop:imt}
The gas cost $g_{\mathrm{IMT}}$ of an IMT append at depth $d$
satisfies
\[
g_{\mathrm{IMT}} \;=\; O(d)
\]
with respect to tree depth.
\end{proposition}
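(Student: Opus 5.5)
The plan mirrors the route taken for Theorem~\ref{thm:phdag}: first bound the number of EVM operations per IMT append by a linear function of $d$, then argue that each operation's gas price is a protocol constant independent of $d$ and of the leaf index $i$.

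\textbf{Step 1: decompose the append.} Using the frontier-update pattern of Section~\ref{section:imt-primitive}, we split the append into an unconditional part and a loop. The unconditional part consists of the base transaction, calldata, event emission, the \texttt{nextIndex} read and write, and the \texttt{root} write. We bundle its cost into $c_0$, exactly as the notation section does. The loop runs over levels $k \in \{0,\ldots,d-1\}$. At each level it performs one \texttt{keccak256} of two 32-byte words, a constant number of stack and arithmetic opcodes for the bit test and the zero-hash lookup, and then one of two branches. If bit $k$ of $i$ is zero, the branch is an \texttt{SSTORE} to \texttt{filledSubtrees[k]}. Otherwise it is an \texttt{SLOAD} of that slot.

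\textbf{Step 2: bound each iteration.} We show that every iteration costs at most $\max(c_L,c_R)$ gas. Under the access-list model already formalized in the proof of Lemma~\ref{lemma:nodep}, part (ii), each slot \texttt{filledSubtrees[k]} is touched at most a bounded number of times per transaction. Its cost is therefore bounded by the cold prices $\gamma$ and $\gamma_{\mathrm{R}}$, whatever its warm/cold status and value transition. The \texttt{keccak256} of a fixed 64-byte input has fixed gas, and the zero-hash constants are read from code or a fixed slot at bounded cost. Hence $c_L$ and $c_R$ are constants independent of $d$ and $i$.

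\textbf{Step 3: sum.} Summing over the loop gives
\[
g_{\mathrm{IMT}} \;=\; c_0 + W(i,d)\,c_L + \bigl(d - W(i,d)\bigr)\,c_R \;\le\; c_0 + d\max(c_L,c_R),
\]
which is $O(d)$ uniformly in $i$. Since $c_R>0$, the same expression also gives the lower bound $g_{\mathrm{IMT}} \ge c_0 + d\,c_R$, so the cost is in fact $\Theta(d)$. We will note this, but it is not needed for the proposition as stated. The proposition itself requires only the upper bound, so we do not need to invoke Lemma~\ref{lemma:writes} here.

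\textbf{Main obstacle.} The delicate point is justifying that per-level costs are genuinely constant under EIP-2200 net metering. A write to \texttt{filledSubtrees[k]} may be a zero-to-nonzero set or a nonzero-to-nonzero reset, depending on the history of previous inserts. The fix is to bound every case above by the largest nominal price, the cold zero-to-nonzero set at $\gamma$ plus the cold read at $\gamma_{\mathrm{R}}$. This gives a single constant $c_L^{\max}$, and an upper bound is all that $O(d)$ needs. A secondary check is that \texttt{root} is a single slot, not a root-history buffer. If a history buffer is used, its write is still a single slot per append and is absorbed into $c_0$.
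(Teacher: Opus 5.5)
Your proposal is correct and follows essentially the same route as the paper. The paper justifies the proposition only by the remark preceding it: the append runs a loop of length $d$, each iteration performs a bounded opcode sequence (one \texttt{SLOAD} or \texttt{SSTORE} plus hashing), and this sits on top of a depth-independent fixed overhead. Your version is a more careful rendering of that argument, in particular bounding every net-metering case of the frontier write by its worst-case nominal price, so the equality in your Step~3 can safely be weakened to $g_{\mathrm{IMT}} \le c_0 + d\,\max(c_L^{\max}, c_R)$.
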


The precise distribution of per-append cost within the $O(d)$
envelope is the subject of the stochastic analysis in
Section~\ref{section:stochastic}.

\subsection{Off-Chain Reconstruction is $O(|V|)$}

A property orthogonal to per-append cost is the complexity of
reconstructing the complete registry from public blockchain data
alone. For PHDAG, the registry state is fully encoded in the
sequence of insertion events; no off-chain index is required for
correctness.

\begin{proposition}[$O(|V|)$ Trustless Reconstruction]
\label{prop:reconstruct}
Let $V$ be the set of nodes registered in a deployed PHDAG contract
instance up to the current block height. Given the deployment block
$B_{\text{start}}$ and the contract address, the complete directed
acyclic graph $G = (V, E)$ is reconstructible from public event
logs alone in time $\mathcal{O}(|V|)$ and space $\mathcal{O}(|V|)$,
with no dependency on any off-chain index or trusted service. The
edges in $E$ are oriented parent$\to$child, consistent with the
ancestry-integrity convention of~\cite{Moore26Trustless}.
\end{proposition}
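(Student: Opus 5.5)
The proof is constructive: we give a single-pass reconstruction algorithm and bound its cost by counting the insertion events it touches.

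\textbf{Input.} Let $L$ be the ordered sequence of insertion events emitted by the contract between $B_{\text{start}}$ and the current head. Order is by block number, then transaction index, then log index. This sequence is obtained by a standard log filter keyed on the contract address and the event topic.

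\textbf{Step 1: one event per node.} Every successful append emits exactly one insertion event (step 5 in the proof of Lemma~\ref{lemma:bounded}). Reverted transactions emit no logs and leave no storage writes. The duplicate check at $t=0$ in Lemma~\ref{lemma:nodep} rejects reused identifiers. Together these give a bijection between $L$ and $V$, so $|L| = |V|$. Each event carries a bounded payload $(\mathsf{arId}, \text{parent}, \text{manifest})$ by Lemma~\ref{lemma:bounded}, so decoding it costs $O(1)$.

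\textbf{Step 2: single pass.} Maintain a hash map $M$ from $\mathsf{arId}$ to node record, together with adjacency lists for children. For each event in $L$:
\begin{itemize}
\item insert $\mathsf{arId}$ into $M$ with its manifest;
\item if the parent reference is nonzero, append the edge $\text{parent}\to\mathsf{arId}$ to the parent's child list.
\end{itemize}
The contract's parent-existence check ($t=1$) guarantees that every nonzero parent appeared in an earlier event. The lookup therefore always succeeds, and no deferred resolution or second pass is needed. Each iteration does $O(1)$ expected work, so the total is $O(|V|)$ time. Storage is one record per node plus $|E| \le |V|$ edges, since each node has at most one parent, giving $O(|V|)$ space.

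\textbf{Step 3: correctness.} Every edge points from an earlier event to a later one, so the resulting graph is acyclic and coincides with the contract's logical state. The orientation is parent$\to$child, matching the convention of~\cite{Moore26Trustless}.

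\textbf{Step 4: trustlessness.} Each log is committed in the receipts root of its block header (Section~\ref{section:tiers}). Any party holding the chain headers can therefore check each receipt against that root. No off-chain index or trusted service is required.

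\textbf{Main obstacle: the cost model.} Two points need to be fixed so the bound is stated honestly.
\begin{itemize}
\item Hash-map operations are $O(1)$ only in expectation. We either state expected linear time, or note that a trie or array indexed by insertion order (available from the counter) gives worst-case bounds.
\item The log-retrieval step scans blocks, not just events. Its cost depends on the span from $B_{\text{start}}$ to the head, unless bloom-filter or indexed topic queries are treated as primitive.
\end{itemize}
We plan to state the bound as $O(|V|)$ processing over the retrieved event sequence, measured in events handled. Header and receipt verification is counted per relevant receipt, and we note that block-range scanning is an access-layer cost outside the $O(|V|)$ claim.
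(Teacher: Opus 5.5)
Your proposal is correct and follows essentially the same route as the paper. Two facts give $m = |V|$: each successful append emits exactly one event, and the contract reverts on duplicate identifiers. A single pass then records each identifier, its manifest commitment, and the parent$\to$child edge at $O(1)$ cost, and with $|E| \le |V|$ this yields the linear time and space bounds. Your additions are more careful than the paper, which simply counts the \texttt{eth\_getLogs} fetch as an $O(|V|)$ transfer: forward-only edges from the parent-existence check (hence acyclicity), receipts-root authentication, and explicit caveats on expected-time hashing and block-range scanning. Note, though, that inclusion checks against receipts roots authenticate the logs returned but do not prove none were omitted, so completeness still rests on the header/bloom scan you place in the access layer.
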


\begin{proof}
The proof proceeds in four parts: the event log model, an explicit
reconstruction algorithm, time complexity, and space complexity.

\textbf{(i) Event log model.}
Each invocation of the PHDAG append function emits exactly one
\texttt{Appended} event carrying the full metadata of the registered
node. Model the $j$-th such event as the tuple
\[
  L_j = \langle \kappa_j, \mu_j, \kappa_{\mathrm{par},j} \rangle,
\]
where $\kappa_j$ is the unique identifier, $\mu_j$ is the manifest
commitment, and $\kappa_{\mathrm{par},j}$ is either the parent's
identifier or the sentinel $\emptyset$ if $\kappa_j$ is a root node.
A single \texttt{eth\_getLogs} call filtered by the event signature
hash $\mathcal{T}_0 = \mathrm{keccak256}(\text{``Appended(bytes32,bytes32,bytes32)''})$
over the block range $[B_{\text{start}}, B_{\text{current}}]$
returns the chronologically ordered sequence
\[
  \mathcal{S} = [L_1, L_2, \ldots, L_m].
\]
The contract reverts on duplicate identifiers (Section~\ref{section:phdag-primitive}),
so each event corresponds to a distinct node, and the append-only
nature of the ledger guarantees no event is lost. Therefore
$m = |V|$.

\textbf{(ii) Reconstruction algorithm.}
Initialize an empty graph $G = (V, E)$ with $V = \emptyset$,
$E = \emptyset$, and an empty hash map $\mathcal{M}$ (with
amortized $\mathcal{O}(1)$ insertion). Process $\mathcal{S}$ in a
single pass:
\begin{algorithmic}[1]
\For{\textbf{each} $L_j = \langle \kappa_j, \mu_j, \kappa_{\mathrm{par},j} \rangle \in \mathcal{S}$}
    \State $V \gets V \cup \{\kappa_j\}$
    \State $\mathcal{M}[\kappa_j] \gets \mu_j$
    \If{$\kappa_{\mathrm{par},j} \neq \emptyset$}
        \State $E \gets E \cup \{(\kappa_{\mathrm{par},j}, \kappa_j)\}$
    \EndIf
\EndFor
\end{algorithmic}
The edge ordering $(\kappa_{\mathrm{par},j}, \kappa_j)$ encodes the
provenance direction parent$\to$child.

\textbf{(iii) Time complexity.}
Fetching $\mathcal{S}$ is an $\mathcal{O}(|V|)$ data transfer over
the JSON-RPC interface. Each iteration of the loop performs a
constant number of operations: one vertex insertion ($c_V$), one
hash-map insertion ($c_M$, amortized constant), one conditional
test, and at most one edge insertion ($c_E$). Letting
$C_{\mathrm{loop}} = c_V + c_M + c_E$,
\[
  T(|V|) \;=\; \sum_{j=1}^{|V|} C_{\mathrm{loop}} \;=\; C_{\mathrm{loop}} \cdot |V| \;=\; \mathcal{O}(|V|).
\]

\textbf{(iv) Space complexity.}
The algorithm stores each identifier once in $V$ and once as a key
in $\mathcal{M}$, and each non-root node contributes one edge to
$E$. Because every node has at most one parent reference,
$|E| \leq |V|$. The total space is therefore
\[
  S(|V|) \;=\; |V| + |V| + |E| \;\leq\; 3|V| \;=\; \mathcal{O}(|V|).
\]
The graph is fully recovered in linear time and space without
invoking any off-chain index or trusted service. \qed
\end{proof}

This property is significant for the robustness of PHDAG-based
registries: the integrity of the registry is not coupled to any
off-chain service. A complete re-sync can be performed by any
party with read access to the blockchain, with time complexity
linear in the registry size and no other dependencies.

\section{A Stochastic Model of IMT Per-Insert Cost}
\label{section:stochastic}

Theorem~\ref{thm:phdag} and Proposition~\ref{prop:imt} establish the
structural complexity classes of the two primitives. We now refine
the IMT analysis by modeling the per-insert cost as a random variable
whose distribution is determined by the binary representation of the
leaf index.

\subsection{The Frontier Write Distribution}

\begin{lemma}[Frontier Write Count]
\label{lemma:writes}
For an insert at leaf index $i \in [0, 2^d)$, the number of
\texttt{filledSubtrees[k]} slots written during the frontier update
loop is
\[
  W(i, d) \;=\; d - \mathrm{pop}(i \bmod 2^d),
\]
where $\mathrm{pop}(\cdot)$ denotes the Hamming weight.
\end{lemma}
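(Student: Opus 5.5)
The lemma follows by analysing the frontier-update loop of the reference IMT insert (the Tornado Cash \texttt{MerkleTreeWithHistory} pattern of Section~\ref{section:imt-primitive}). The loop iterates $k = 0, 1, \ldots, d-1$ and maintains a running index $j_k$, initialised to $j_0 = i$ and updated by $j_{k+1} = \lfloor j_k / 2 \rfloor$. At level $k$ the branch taken is governed by the parity of $j_k$. If $j_k$ is even, the current node is a left child: the contract writes the current hash into \texttt{filledSubtrees[k]} and pairs it with the precomputed zero hash. If $j_k$ is odd, the current node is a right child: the contract reads \texttt{filledSubtrees[k]} as the left sibling and performs no write. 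This is exactly the conditional write pattern already stated in Section~\ref{section:imt-primitive}. The first step of the proof is therefore to record that the number of writes equals the number of levels $k < d$ at which $j_k$ is even.

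The second step is a short induction showing that $j_k = \lfloor i / 2^k \rfloor$. From this it follows that $j_k \bmod 2$ is precisely bit $k$ of $i$. The loop only runs over $k < d$, so only bits $0$ through $d-1$ are inspected, and these are exactly the bits of $i \bmod 2^d$. The reduction modulo $2^d$ is harmless, since $i < 2^d$ already, but it makes explicit that higher bits play no role.

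Counting then finishes the argument. Among the $d$ inspected bit positions, exactly $\mathrm{pop}(i \bmod 2^d)$ carry a one (read levels) and the rest carry a zero (write levels). This gives $W(i,d) = d - \mathrm{pop}(i \bmod 2^d)$. The unconditional writes to \texttt{root} and \texttt{nextIndex} lie outside the loop and are excluded from $W$ by definition; they are absorbed into $c_0$.

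The main obstacle is not the arithmetic but pinning down the contract semantics precisely enough that the claim is about the actual code. One point needs checking: the branch is on the parity of the running index $j_k$, not on the original index $i$. The induction $j_k = \lfloor i/2^k\rfloor$ settles this. I would also note that the case $i = 2^d - 1$, giving $W = 0$, and the case $i = 0$, giving $W = d$, serve as sanity checks. These two cases also set up the binomial distribution of $W$ under a uniform $i$, which will be used next for the mean and variance.
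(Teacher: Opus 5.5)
Your proposal is correct and follows essentially the same route as the paper's proof: the running-index recurrence $x_{k+1} = \lfloor x_k/2 \rfloor$ unrolled to $x_k = \lfloor i/2^k \rfloor$, the write-iff-even condition identified with $b_k(i) = 0$, and summation of the indicators $1 - b_k(i)$ over the $d$ levels. Your point that the branch is on the running index rather than on $i$ itself, settled by the induction, is handled by the paper's part (ii) in the same way.
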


\begin{proof}
The proof proceeds in four parts: the frontier state model, the
recurrence governing the loop's index variable, the mutation
condition at each level, and the resulting Hamming-weight identity.

\textbf{(i) Frontier state.}
An IMT of depth $d$ maintains an array
$\mathbf{f} = [f_0, f_1, \dots, f_{d-1}]$ holding the rightmost
filled internal node at each level. The insertion routine traverses
levels $k = 0, 1, \dots, d-1$ from leaf to root, conditionally
updating $\mathbf{f}$ at each level. Let $b_k(i)$ denote the $k$-th
bit of $i$ in its unique base-2 expansion:
\[
  i = \sum_{k=0}^{d-1} b_k(i) \cdot 2^k, \qquad b_k(i) = \lfloor i / 2^k \rfloor \bmod 2.
\]

\textbf{(ii) Loop recurrence.}
The contract maintains a local variable $x_k$ tracking the active
node index at level $k$, with initial value and update rule:
\[
  x_0 = i, \qquad x_{k+1} = \lfloor x_k / 2 \rfloor.
\]
Unrolling the recurrence to depth $k$ yields the closed form
\[
  x_k = \lfloor i / 2^k \rfloor.
\]

\textbf{(iii) Mutation condition.}
At level $k$ the contract writes to $f_k$ iff $x_k$ corresponds to
a left child of its parent in the binary tree, equivalently, iff
$x_k$ is even. When $x_k$ is odd, the loop instead reads $f_k$ to
combine with $x_k$'s sibling and ascends without writing. The
write condition is therefore
\[
  \text{Write}_k \iff x_k \equiv 0 \pmod 2 \iff \lfloor i / 2^k \rfloor \bmod 2 = 0 \iff b_k(i) = 0.
\]
Let $\mathbb{I}_k = \mathbb{I}(b_k(i) = 0) = 1 - b_k(i)$ be the
indicator of a write at level $k$.

\textbf{(iv) Total write count.}
Summing the indicator over all $d$ levels:
\[
  W(i, d) \;=\; \sum_{k=0}^{d-1} \mathbb{I}_k \;=\; \sum_{k=0}^{d-1} \bigl(1 - b_k(i)\bigr) \;=\; d - \sum_{k=0}^{d-1} b_k(i).
\]
The remaining sum is the count of $1$-bits in the low $d$ bits of
$i$, which is by definition $\mathrm{pop}(i \bmod 2^d)$. Therefore
\[
  W(i, d) \;=\; d - \mathrm{pop}(i \bmod 2^d). \qed
\]
\end{proof}

\begin{corollary}[Total Warm-SSTORE Count]
\label{cor:sstore}
The total number of warm SSTORE operations performed per IMT insert
is
\[
S(i, d) \;=\; W(i, d) + 2 \;=\; d - \mathrm{pop}(i \bmod 2^d) + 2,
\]
where the additive $2$ accounts for the unconditional writes to
\texttt{root} and \texttt{nextIndex}.
\end{corollary}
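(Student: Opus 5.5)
This follows directly from Lemma~\ref{lemma:writes} once the SSTOREs executed by one IMT insert are split into two disjoint families. The first family is the frontier writes to \texttt{filledSubtrees[k]} for $k \in \{0,\dots,d-1\}$. The second is the unconditional writes to \texttt{root} and \texttt{nextIndex} described in Section~\ref{section:imt-primitive}. The insert performs no other SSTORE: the loop's active index $x_k$, the intermediate hash, and the zero-hash lookups are either locals or read-only (zero hashes may be immutable constants or SLOADed, but never written). So I plan to write the total SSTORE count as the sum of the two family counts and evaluate each separately.

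For the frontier family, I will invoke Lemma~\ref{lemma:writes} directly. It gives exactly $W(i,d) = d - \mathrm{pop}(i \bmod 2^d)$ writes, one per level with $b_k(i)=0$, and each level writes at most once because the loop visits each $k$ exactly once. For the unconditional family, the reference contract executes exactly one SSTORE to \texttt{root} after the loop and exactly one SSTORE to \texttt{nextIndex} (the increment), for $2$ writes. These slots are distinct from every \texttt{filledSubtrees[k]} slot, since fixed state variables and static array entries occupy different storage positions. There is therefore no double counting, and the sum is $S(i,d) = W(i,d) + 2$.

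The step needing care is the qualifier ``warm'', which I would justify per slot via the access-list model of Lemma~\ref{lemma:nodep}(ii). A write level reads $f_k$ before writing it (the $c_L$ definition includes SLOAD + SSTORE), so the slot is in $\mathcal{A}_t$ at write time. \texttt{nextIndex} is loaded to obtain $i$ before being incremented. \texttt{root} is typically read too, e.g.\ in a history-ring or root-index update; if the minimal implementation does not read it first, I would note that these slots are nonzero-to-nonzero updates of already-allocated storage. That places them in the $\beta$ price class rather than zero-to-nonzero allocation, which is the sense of ``warm SSTORE'' used for $\beta$ in Section~\ref{section:gas-schedule}.

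The main obstacle is this pricing bookkeeping, not the count. A frontier slot written for the first time in the contract's life (zero to nonzero) would be billed as an allocation, so I would state the count as exact and the ``warm'' classification as holding in the steady state that the stochastic model of Section~\ref{section:stochastic} assumes. Any first-touch allocations are absorbed into $c_0$ and $c_L$.
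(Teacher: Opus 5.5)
Your argument is correct and takes the same route as the paper, which gives no separate proof: the corollary is the Frontier Write Count lemma's $W(i,d)=d-\mathrm{pop}(i \bmod 2^d)$ frontier writes plus the two unconditional writes to \texttt{root} and \texttt{nextIndex}, and your disjointness and no-other-SSTORE checks make that sum explicit. One correction to your side discussion of the ``warm'' label: the paper's footnote on $c_L$ says the SLOAD at a write level reads the zero-hash table, not $f_k$, so $f_k$ is not warmed by a prior read. The label for the frontier writes (and probably \texttt{root}) therefore rests on your fallback reading, a nonzero-to-nonzero reset priced at $\beta$, which is exactly the sense the paper intends.
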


\subsection{Distribution of Cost}

Treating the leaf index $i$ as uniformly distributed on $[0, 2^d)$,
the popcount $P = \mathrm{pop}(i)$ is distributed as a sum of $d$
independent Bernoulli($1/2$) random variables:
\[
P \sim \mathrm{Binomial}(d, 1/2).
\]

\begin{remark}[On the uniform-index assumption]
\label{rem:uniform}
Real-world appends populate the IMT sequentially in leaf-index order
$i = 0, 1, 2, \ldots$, which is not uniform. We model $i$ as
uniformly distributed in $[0, 2^d)$ because the cost of an append
depends on $i$ only through its low-order bit pattern (Lemma~\ref{lemma:writes}),
and across any contiguous span of $2^d$ sequential indices the
popcount distribution coincides exactly with the binomial. The
uniform model is therefore the correct \emph{lifetime-average}
distribution for the per-insert cost: it is the mean over a full
epoch of $2^d$ sequential appends, not an idealization that
abstracts away the order of insertion.

This lifetime-average assumption should be distinguished from the
index distribution \emph{sampled by our experiments}. The depth
sweep (Section~\ref{section:methodology}) measures only the first
$\max(d,2)$ appends at each depth, i.e.\ the small indices
$i = 1, 2, 3, \ldots$, which have low Hamming weight and hence an
above-average write-level count $W = d - \mathrm{pop}(i)$. The sweep
is thus a write-heavy sub-sample of the lifetime distribution, not a
uniform draw from it. The two coincide only in the limit of a fully
populated tree. This distinction is load-bearing for interpreting the
measured mean slope and crossover (Sections~\ref{section:crossover}
and~\ref{section:agreement}): the model evaluated at the uniform
distribution and the model evaluated at the sweep's sampled indices
are the same model, and both are reported. Non-uniform write
schedules in the wild (skipping indices, adversarial ordering) are
excluded by the monotonic counter in the canonical IMT
implementation; an inserting party cannot choose its leaf index.
\end{remark}

\begin{theorem}[Moments of Warm-SSTORE Count]
\label{thm:moments}
Under uniform sampling of the leaf index $i$ over the tree capacity
index space $\Omega = \{0, 1, \ldots, 2^d - 1\}$, the per-insert
warm-SSTORE count $S$ defined in Corollary~\ref{cor:sstore} has
moments
\begin{align}
\mathbb{E}[S]   &= \tfrac{d}{2} + 2, \\
\mathrm{Var}[S] &= \tfrac{d}{4}.
\end{align}
\end{theorem}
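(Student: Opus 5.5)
The plan is to reduce everything to the distribution of the Hamming weight of a uniformly random $d$-bit integer. By Corollary~\ref{cor:sstore}, $S = d - \mathrm{pop}(i \bmod 2^d) + 2$. Since $i \in \Omega = \{0,\dots,2^d-1\}$, we have $i \bmod 2^d = i$, so $S = d + 2 - P$ with $P = \mathrm{pop}(i)$. Thus $S$ is an affine function of $P$ with slope $-1$, and it suffices to compute $\mathbb{E}[P]$ and $\mathrm{Var}[P]$.

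The first substantive step is to justify $P \sim \mathrm{Binomial}(d,1/2)$, which the paper asserts just before the theorem. To make it rigorous, I will use the bijection $i \mapsto (b_0(i),\dots,b_{d-1}(i))$ between $\Omega$ and $\{0,1\}^d$ from the base-2 expansion in the proof of Lemma~\ref{lemma:writes}. The uniform measure on $\Omega$ pushes forward to the uniform measure on $\{0,1\}^d$, which is the product of $d$ fair Bernoulli measures. Hence the bits $b_k(i)$ are independent Bernoulli$(1/2)$ random variables, and $P = \sum_{k=0}^{d-1} b_k(i)$. Equivalently, one can count: $\Pr[P = j] = \binom{d}{j}2^{-d}$, since exactly $\binom{d}{j}$ strings of length $d$ have weight $j$. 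This is the only step that needs care, and I expect it to be the main (though mild) obstacle. The key point is that independence of the bits follows from the product structure of the uniform measure on $\{0,1\}^d$, not merely from each bit being marginally fair.

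With that in hand, the moments follow from standard facts. By linearity, $\mathbb{E}[P] = \sum_k \mathbb{E}[b_k] = d/2$. By independence, $\mathrm{Var}[P] = \sum_k \mathrm{Var}[b_k] = d \cdot \tfrac14$, using $\mathrm{Var}[b_k] = \tfrac12 - \tfrac14$. Finally, the affine rules $\mathbb{E}[a - P] = a - \mathbb{E}[P]$ and $\mathrm{Var}[a - P] = \mathrm{Var}[P]$ with $a = d+2$ give
\[
\mathbb{E}[S] = d + 2 - \tfrac{d}{2} = \tfrac{d}{2} + 2, \qquad \mathrm{Var}[S] = \tfrac{d}{4}.
\]

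As an alternative that avoids $P$ entirely, I could write $S = 2 + \sum_k \mathbb{I}_k$ with $\mathbb{I}_k = 1 - b_k(i)$, exactly as in the proof of Lemma~\ref{lemma:writes}. The $\mathbb{I}_k$ are again independent Bernoulli$(1/2)$ variables, which yields the same two moments directly.
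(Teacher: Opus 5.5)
Your proposal is correct and follows essentially the same route as the paper: both identify the uniform measure on $\Omega$ with the product of $d$ Bernoulli$(1/2)$ measures on the bits, conclude $P = \mathrm{pop}(I) \sim \mathrm{Binomial}(d,1/2)$, and then apply linearity of expectation and $\mathrm{Var}[aX+b] = a^2\,\mathrm{Var}[X]$ to $S = d + 2 - P$. Your point that independence of the bits comes from the product structure of the uniform measure, not from each bit being marginally fair, is the same point the paper makes explicit in step (ii) of its proof.
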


\begin{proof}
The proof proceeds in four parts: the probability space, the
decomposition of the leaf index into independent Bernoulli bits,
the resulting binomial distribution of the popcount, and the
application of linearity and scale-translation to compute the
moments of $S$.

\textbf{(i) Probability space.}
Let $(\Omega, \mathcal{F}, \mathbb{P})$ be a finite probability
space with $\Omega = \{0, 1, \ldots, 2^d - 1\}$,
$\mathcal{F} = 2^{\Omega}$, and the uniform probability measure
\[
  \mathbb{P}(I = i) = \frac{1}{2^d}, \qquad \forall i \in \Omega.
\]
By Corollary~\ref{cor:sstore}, the warm-SSTORE count is the
deterministic transformation
\[
  S(I, d) = d - \mathrm{pop}(I) + 2.
\]

\textbf{(ii) Bit-level decomposition.}
Define $d$ random variables $B_0, B_1, \ldots, B_{d-1}$ as the
binary digits of $I$:
\[
  I = \sum_{k=0}^{d-1} B_k \cdot 2^k, \qquad B_k = \lfloor I / 2^k \rfloor \bmod 2.
\]
For each $k$, exactly half of the integers in $\Omega$ have bit
$k$ set, so
\[
  \mathbb{P}(B_k = 1) = \frac{2^{d-1}}{2^d} = \frac{1}{2},
\]
giving $B_k \sim \mathrm{Bernoulli}(1/2)$. Moreover, the uniform
measure on $\Omega$ factorizes exactly as a product of $d$
Bernoulli$(1/2)$ measures: for any bit pattern
$(b_0, \ldots, b_{d-1}) \in \{0,1\}^d$,
\[
  \mathbb{P}(B_0 = b_0, \ldots, B_{d-1} = b_{d-1}) = \frac{1}{2^d} = \prod_{k=0}^{d-1} \mathbb{P}(B_k = b_k).
\]
Therefore the bits $\{B_k\}_{k=0}^{d-1}$ are i.i.d.\ Bernoulli$(1/2)$.

\textbf{(iii) Binomial distribution of the popcount.}
The Hamming weight is the sum of the indicator bits:
\[
  P = \mathrm{pop}(I) = \sum_{k=0}^{d-1} B_k.
\]
As a sum of $d$ i.i.d.\ Bernoulli$(1/2)$ variables, $P$ follows the
binomial distribution
\[
  P \sim \mathrm{Binomial}(d, 1/2),
\]
with first and second moments
\[
  \mathbb{E}[P] = d \cdot \tfrac{1}{2} = \tfrac{d}{2}, \qquad
  \mathrm{Var}[P] = d \cdot \tfrac{1}{2} \cdot \tfrac{1}{2} = \tfrac{d}{4}.
\]

\textbf{(iv) Moments of $S$.}
Applying linearity of expectation to $S = d - P + 2$:
\[
  \mathbb{E}[S] = d - \mathbb{E}[P] + 2 = d - \tfrac{d}{2} + 2 = \tfrac{d}{2} + 2.
\]
Applying the scale-translation rule
$\mathrm{Var}[aX + b] = a^2 \mathrm{Var}[X]$ with $a = -1$ and
$b = d + 2$:
\[
  \mathrm{Var}[S] = (-1)^2 \mathrm{Var}[P] = \mathrm{Var}[P] = \tfrac{d}{4}.
\]
This establishes both moments. \qed
\end{proof}

These are the moments of the write count itself. To turn them into
gas, we attach a cost to each level type and propagate the
distribution of $W$ through the resulting affine cost expression;
this is the content of the next theorem.

\begin{theorem}[Moments of Per-Insert Gas Cost]
\label{thm:cost}
The per-insert gas cost $g$ of an IMT append decomposes by level
type. Each level of the frontier loop is either a \emph{write level}
(a left-child level, where $\mathtt{filledSubtrees}[k]$ is updated)
or a \emph{read level} (a right-child level, where the slot is read
but not written). Writing $W = W(i,d)$ for the write-level count of
Lemma~\ref{lemma:writes}, the cost is
\[
  g \;=\; c_0 + c_L \cdot W + c_R \cdot (d - W),
\]
where $c_0$ is the fixed per-transaction overhead (base transaction,
calldata, event emission, and the unconditional \texttt{root} and
\texttt{nextIndex} writes), $c_L$ is the marginal cost of a write
level, and $c_R$ is the marginal cost of a read level. Under uniform
leaf-index sampling, with $\mathbb{E}[W] = d/2$ and
$\mathrm{Var}[W] = d/4$ (Theorem~\ref{thm:moments}, via
$W = d - P$),
\begin{align}
\mathbb{E}[g]   &= c_0 + \tfrac{d}{2}\,(c_L + c_R), \label{eq:meang} \\
\mathrm{Var}[g] &= (c_L - c_R)^2 \cdot \tfrac{d}{4}, \label{eq:varg}
\end{align}
and the standard deviation is $\sqrt{\mathrm{Var}[g]} = |c_L - c_R|\,\sqrt{d}/2$.
\end{theorem}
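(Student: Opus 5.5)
The plan is to treat the cost as an affine function of the single random quantity $W$, which reduces everything to the moments already computed in Theorem~\ref{thm:moments}. First I would justify the decomposition $g = c_0 + c_L W + c_R(d-W)$. By Lemma~\ref{lemma:writes}, every one of the $d$ loop iterations is exactly one of two types, write or read. The type is fixed by the bit $b_k(i)$, and each type performs the same opcode sequence regardless of $k$: an SLOAD of \texttt{filledSubtrees[k]} or of the zero hash, one keccak256, and the SSTORE only on write levels. Under the EIP-2929 model recalled in the proof of Lemma~\ref{lemma:nodep}, the access list starts empty in every transaction and the slots \texttt{filledSubtrees[k]} are distinct across levels. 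So the cold/warm status of each level's accesses depends only on the level type, not on $k$ or on other levels, and each type has a well-defined marginal cost $c_L$ or $c_R$. The remaining operations, namely base transaction, calldata, event, and the unconditional \texttt{root} and \texttt{nextIndex} writes of Corollary~\ref{cor:sstore}, are independent of $i$ and are collected into $c_0$.

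Next I would rewrite the cost as $g = (c_0 + c_R d) + (c_L - c_R)W$, an affine function $aW + b$ with $a = c_L - c_R$ and $b = c_0 + c_R d$ deterministic in $d$. From $W = d - P$ with $P \sim \mathrm{Binomial}(d,1/2)$ (part (iii) of the proof of Theorem~\ref{thm:moments}), we get $\mathbb{E}[W] = d/2$ and $\mathrm{Var}[W] = \mathrm{Var}[P] = d/4$. Linearity of expectation then gives
\[
\mathbb{E}[g] = c_0 + c_R d + (c_L - c_R)\tfrac{d}{2} = c_0 + \tfrac{d}{2}(c_L + c_R),
\]
and the scale-translation rule gives $\mathrm{Var}[g] = (c_L - c_R)^2\,d/4$. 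Taking the nonnegative square root yields the stated standard deviation $|c_L - c_R|\sqrt{d}/2$.

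The algebra is routine. The one step needing care is the modeling claim that $c_L$ and $c_R$ are level-independent constants, i.e.\ that $g$ is exactly affine in $W$. Two things could break this. One is SSTORE value-transition pricing under EIP-2200, since overwriting a nonzero frontier slot costs differently from a first zero-to-nonzero set. The other is the zero-hash lookups on read levels. For both I would argue per-level uniformity, and where it fails (e.g.\ first-epoch writes into still-zero slots) I would state the theorem as holding with $c_L, c_R$ interpreted as the per-level costs in the steady-state regime, with the exceptions absorbed as a bounded perturbation.
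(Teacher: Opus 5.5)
Your proposal is correct and is essentially the paper's proof. The paper likewise justifies the decomposition by noting that every write level runs the same opcode sequence: an SSTORE to a frontier slot that already holds a nonzero hash (which is how it disposes of your EIP-2200 concern), a zero-hash SLOAD, one \texttt{keccak256}, and loop overhead; every read level runs one SLOAD, one \texttt{keccak256}, and the same overhead. It then writes $g = c_0 + c_R d + (c_L - c_R)W$ and applies linearity and $\mathrm{Var}[aX+b] = a^2\,\mathrm{Var}[X]$ with $\mathbb{E}[W] = d/2$ and $\mathrm{Var}[W] = d/4$, exactly as you do. Your appeal to the per-transaction access list to show that the per-level prices do not depend on $k$ is a slightly more explicit justification of the same constancy claim, not a different route.
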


\begin{proof}
The level-type decomposition is exact: the frontier loop runs $d$
iterations, of which exactly $W$ are write levels and $d - W$ are
read levels (Lemma~\ref{lemma:writes}). Each write level executes
the same opcode sequence (one SSTORE updating a slot that already
holds a nonzero value, one SLOAD of the precomputed zero hash, one
\texttt{keccak256}, and fixed loop overhead), so it has a fixed
marginal cost $c_L$; each read level executes one SLOAD, one
\texttt{keccak256}, and the same loop overhead, with fixed marginal
cost $c_R$. The per-transaction remainder is collected into $c_0$.
This gives $g = c_0 + c_L W + c_R(d - W) = c_0 + c_R d + (c_L - c_R)W$,
an affine function of the single random variable $W$. Applying
linearity of expectation and the scale-translation rule
$\mathrm{Var}[aX + b] = a^2 \mathrm{Var}[X]$ with the moments of
$W$ yields Equations~\ref{eq:meang} and~\ref{eq:varg}. \qed
\end{proof}

The constants $c_0$, $c_L$, and $c_R$ are determined by the EVM gas
schedule and the contract's opcode sequence.\footnote{A write level
combines an SSTORE to an already-nonzero slot (the frontier slot
holds a nonzero hash from contract initialization onward, so the
update is billed as a value-change reset rather than a
zero-to-nonzero allocation), a cold SLOAD of the zero-hash table,
a \texttt{keccak256} over two words, and loop overhead. A read level
omits the SSTORE. The cold/warm and zero/nonzero pricing of
EIP-2929/3529 makes both per-level costs constants independent of
depth and registry size.} Rather than fix them from first principles,
we calibrate them against the empirical measurements
(Section~\ref{section:results}), where a regression of per-insert
gas on $(W, d - W)$ recovers $c_L \approx 7{,}950$, $c_R \approx 2{,}750$,
and $c_0 \approx 37{,}800$ gas with coefficient of determination
$R^2 > 0.999$, confirming both the level-type decomposition and the
constancy of the per-level costs.

\begin{remark}[Relation to the warm-SSTORE count]
\label{rem:sstore-vs-levels}
The warm-SSTORE count $S = W + 2$ of Corollary~\ref{cor:sstore}
counts only the storage writes. It is not by itself a cost model,
because the read levels also carry depth-linear cost ($c_R$ per
level) through their SLOAD and \texttt{keccak256} operations. A
cost model of the form $g = \alpha + \beta S$, which prices only the
SSTOREs and treats the read levels as part of the fixed overhead,
undercounts the per-depth cost and overstates the crossover depth
(Section~\ref{section:crossover}). The level-type decomposition
above is the corrected model.
\end{remark}

\subsection{Asymptotic Behavior}

The mean cost $\mathbb{E}[g]$ scales linearly in $d$ with slope
$(c_L + c_R)/2$. The standard deviation $\sqrt{\mathrm{Var}[g]}$
scales as $|c_L - c_R|\sqrt{d}/2$, sub-linearly in $d$ but still
growing in absolute terms. Two features of these expressions are
worth drawing out.

First, the variance is governed by the \emph{difference} $c_L - c_R$
between the two per-level costs, not by either cost alone. The
randomness in per-insert cost comes entirely from which levels are
writes and which are reads; if write levels and read levels cost the
same ($c_L = c_R$), the per-insert cost would be deterministic at
$c_0 + d\,c_L$ regardless of leaf index. The cost variance is thus a
direct consequence of the cost asymmetry between level types.

Second, the coefficient of variation
\[
  \mathrm{CV}(g) \;=\; \frac{|c_L - c_R|\,\sqrt{d}/2}{c_0 + d\,(c_L + c_R)/2}
\]
approaches $|c_L - c_R| / \bigl(\sqrt{d}\,(c_L + c_R)\bigr)$ in the
large-$d$ limit and therefore \emph{decreases} as $\Theta(1/\sqrt{d})$,
even as the absolute standard deviation grows as $\Theta(\sqrt{d})$.
IMT cost becomes proportionally more predictable with depth while
becoming absolutely less so.

\subsection{PHDAG as a Degenerate Case}

The stochastic framework developed above for IMT covers PHDAG as
well, but as a trivial instance. For IMT, the warm-SSTORE count $S$
is a random variable: its value depends on the leaf-index bit
pattern, with mean $d/2 + 2$ and variance $d/4$
(Theorem~\ref{thm:moments}). For PHDAG, the analogous quantity is
\emph{deterministic}: every append performs exactly the same number
of storage operations (three cold SSTOREs and one warm SSTORE,
alongside the fixed two-or-three SLOADs of Lemma~\ref{lemma:nodep}),
independent of any input or registry state. As $S$ counts warm
SSTOREs by definition, the SLOADs sit outside the count $S$ but are
equally deterministic and equally independent of registry state.

A random variable that takes a single value with probability 1 is
called \emph{degenerate}; it has the same distribution as a
non-random constant. In this sense, $S_{\mathrm{PHDAG}} = 4$ almost
surely, with $\mathbb{E}[S_{\mathrm{PHDAG}}] = 4$ and
$\mathrm{Var}[S_{\mathrm{PHDAG}}] = 0$. PHDAG is not outside the
framework; it is the framework's trivial case, where all randomness
collapses.

This framing matters for what it predicts about empirical variance.
The framework attributes any non-zero variance in $g_{\mathrm{PHDAG}}$
to sources outside the structural cost model: calldata byte-level
variation (different counts of leading-zero bytes in the random
$\mathsf{arId}$), and other non-structural per-transaction effects.
The empirically observed $\sigma_{\mathrm{PHDAG}} \approx 6$ gas
(Section~\ref{section:results}), with measured costs falling on a
small number of discrete levels spaced $12$ gas apart, is consistent
with this prediction: the residual variance is small and traceable
to calldata composition rather than to any depth- or
registry-size-dependent quantity.

\subsection{The Crossover Depth}
\label{section:crossover}

Equating the corrected mean IMT cost (Equation~\ref{eq:meang}) with
the constant PHDAG cost (Equation~\ref{eq:phdag}) and solving for
$d$ gives the depth at which the two primitives have equal mean cost:
\begin{equation}
d^* \;=\; \frac{g_{\mathrm{PHDAG}} - c_0}{(c_L + c_R)/2}.
\label{eq:crossover}
\end{equation}
Using the calibrated constants $c_0 \approx 37{,}800$,
$c_L \approx 7{,}950$, $c_R \approx 2{,}750$ gas
(Section~\ref{section:results}) and the measured
$g_{\mathrm{PHDAG}} = 76{,}276$ gas,
\[
  d^* \;\approx\; \frac{76{,}276 - 37{,}800}{(7{,}950 + 2{,}750)/2}
        \;\approx\; \frac{38{,}476}{5{,}350}
        \;\approx\; 7.2.
\]
This uniform-index crossover is the apples-to-apples comparison:
the lifetime-average IMT cost (the mean over a full epoch of
$2^d$ sequential inserts, for which the popcount distribution is
exactly binomial; see Remark~\ref{rem:uniform}) against the
constant PHDAG cost. The depth sweep observes a slightly lower
crossover, at $d = 6$, and this is fully accounted for by the
sweep's index distribution rather than by any gap in the model.
The sweep populates each tree with the small leaf indices
$i = 1, 2, 3, \ldots$, whose low Hamming weight gives an
above-average write-level count: at $d = 5$ the sampled mean is
$W \approx 3.75$ against the uniform expectation $d/2 = 2.5$, and
at $d = 6$ it is $W \approx 4.6$ against $3.0$. Evaluating the
fitted cost model at the sweep's actual sampled indices reproduces
the measured per-depth means to within the residual scale of
Section~\ref{section:agreement} (for example $78{,}248$ predicted
versus $78{,}293$ measured at $d = 6$) and yields the same crossover
at $d = 6$. The uniform-index value $d^* \approx 7.2$ and the
as-sampled value $d = 6$ are therefore the same model under two
index distributions: the former is the lifetime average, the latter
the write-heavy low-index slice the sweep measured.

The practical conclusion is insensitive to this distinction. Whether
the crossover is taken as the lifetime-average $7.2$ or the
write-heavy as-sampled $6$, it lies far below the $d \geq 20$
of every production registry surveyed (Section~\ref{section:discussion}),
placing all such systems in the regime where PHDAG is strictly
cheaper.

\section{Methodology}
\label{section:methodology}

\subsection{Network and Tooling}

All experiments were conducted on the Base Sepolia testnet (chain
id $84532$), an Ethereum L2 network operating under the same gas
schedule as Ethereum mainnet. Base Sepolia provides a realistic EVM
environment at testnet cost. Contracts were compiled with
\texttt{solc 0.8.24} and deployed and invoked via \texttt{web3.py}
with local nonce tracking to avoid race conditions with
load-balanced RPC endpoints.

\subsection{Reference Implementations}

We implemented two minimal reference contracts, both available in
the reproducibility repository~\cite{AnchorResearch26}. The
\texttt{IncrementalMerkleTree} contract implements the
frontier-update pattern (as defined in
Section~\ref{section:imt-primitive})
with configurable depth at deployment. The \texttt{ParentHashDAG}
contract implements the minimal three-SSTORE append operation with
no application-specific metadata. Both contracts have been audited
for correctness against unit tests and deployed multiple times
across the experimental range.

\subsection{Experimental Design}

\textbf{Uniform depth sweep.} We deployed the IMT contract at each
depth $d \in \{1, 2, \ldots, 25\}$ and performed $3$ appends against
each instance. The first insert at each depth traverses cold
frontier slots and is discarded from the steady-state analysis;
the remaining inserts are averaged to estimate the per-insert cost
at that depth.

\textbf{Proportional depth sweep.} To obtain better statistical
power at production-relevant depths, we additionally deployed the
IMT contract at each depth and performed $\max(d, 2)$ appends per
depth, again discarding the first insert. At depth $25$ this
yields $24$ post-cold-cascade inserts per depth, affording
meaningful variance estimates.

\textbf{PHDAG run.} A single PHDAG contract was deployed and $200$
appends were performed to characterize the primitive's cost
distribution. Every append in this run was a \emph{root} append: the
parent reference was the zero value, so the parent-existence SLOAD
of Lemma~\ref{lemma:nodep} is short-circuited and each append
performs two cold SLOADs rather than three. The run therefore
measures the root case of the append; a non-root append (one
carrying a live parent reference) performs one additional cold
SLOAD, a fixed increment of $\gamma_{\mathrm{R}} \approx 2{,}100$
gas that does not depend on registry size or depth and so leaves
the constant-bound claim of Theorem~\ref{thm:phdag} intact. PHDAG
does not admit a depth parameter, so no sweep is required; its cost
is by construction invariant to registry size and to any notion of
tree depth (Theorem~\ref{thm:phdag}). Random leaf values throughout
were generated via \texttt{secrets.token\_bytes(32)} to avoid
pathological bit patterns. The first append against the freshly
deployed contract pays a one-time cold-initialization cost and is
excluded from the steady-state statistics below, exactly as the
first insert at each IMT depth is excluded from the sweep.

\subsection{Measurement}

Gas consumption per insert was obtained from the \texttt{gasUsed}
field of the post-mining transaction receipt. Receipts were
collected sequentially with local nonce tracking between
transactions to preclude concurrent execution. All raw data is
published alongside this manuscript in a reproducibility
repository~\cite{AnchorResearch26}.

\section{Results}
\label{section:results}

The PHDAG primitive exhibits mean gas cost
$\bar{g}_{\mathrm{PHDAG}} = 76{,}276$ gas with standard deviation
$\sigma_{\mathrm{PHDAG}} \approx 6$ gas across the steady-state
portion of the $200$-append root run (the $199$ appends remaining
after the first-insert cold-initialization outlier is excluded; see
Section~\ref{section:methodology}). The measured costs take only a
small number of discrete values spaced $12$ gas apart, tracking the
number of leading-zero bytes in the random $\mathsf{arId}$ calldata;
this $12$-gas quantum is the spacing between adjacent levels, not
the standard deviation, and the $199$ steady-state appends span a
range of just $36$ gas in total. The excluded first append costs
$93{,}379$ gas, roughly $17{,}100$ above the steady-state cluster, a
one-time cold-initialization cost on the freshly deployed contract;
including it raises the run mean to $76{,}361$ gas, and it is set
aside on the same basis as the IMT cold-cascade first insert.
The steady-state total sits below the nominal structural
expression of Equation~\ref{eq:phdag} for two reasons, both
structural rather than incidental. First, the benchmarked appends
are root appends, so the write to \texttt{parents[arId]} stores the
zero parent reference to an already-zero slot: a zero-store in place
of one of the three cold sets, removing roughly one $\gamma$ from the
nominal storage total (Lemma~\ref{lemma:nodep}). Second, the rounded
schedule values $\gamma \approx 22{,}100$ and
$\gamma_{\mathrm{R}} \approx 2{,}100$ are upper nominal figures, and
the value-transition pricing of EIP-2200/2929/3529 for the specific
cold/warm and zero/nonzero cases involved is somewhat lower. The
measured $76{,}276$ is therefore the root-case constant
($2\gamma_{\mathrm{R}}$ of SLOADs, two cold sets plus one zero-store,
and the warm counter write), not the non-root nominal sum; a non-root
append adds one cold SLOAD and turns the zero-store back into a cold
set, a fixed increment that leaves the structural claim unchanged.
What the formal analysis predicts, and what the measurement confirms,
is that the per-append cost is a fixed constant, independent of
registry size and of any notion of tree depth
(Theorem~\ref{thm:phdag}). The exact constant is a matter of which
append case is measured and of the schedule's effective per-operation
prices, not of the structure, and is recovered directly from the
measurement.

\subsection{Proportional Depth Sweep}

Figure~\ref{fig:scaling} displays the results of the proportional
depth sweep. Across the full range $d \in \{1, \ldots, 25\}$, the
per-insert cost of IMT grows approximately linearly with $d$. The
PHDAG primitive remains at $76{,}276$ gas across all
depths, consistent with the depth-invariance established by
Theorem~\ref{thm:phdag} and visualized as a horizontal reference
line in the figure. The per-insert cost also exhibits the
leaf-index dependence predicted by Lemma~\ref{lemma:writes}: inserts
at indices $2^k$ (longest write cascades, where many popcount bits
are zero) are the most expensive, and those at $2^k - 1$ (all low
bits one) the cheapest.

\begin{figure}[h]
\centering
\includegraphics[width=0.95\linewidth]{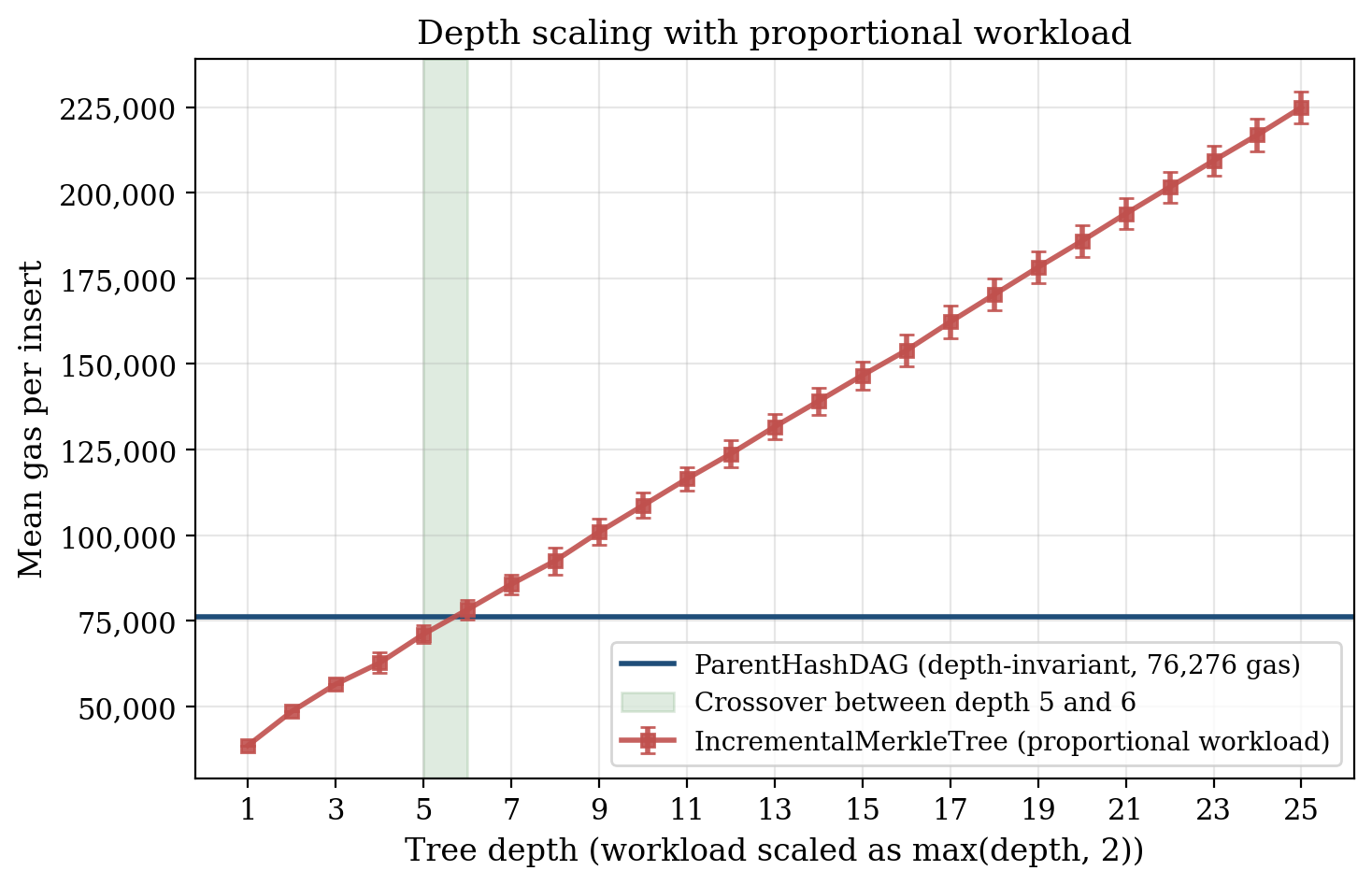}
\caption{Per-insert gas cost of PHDAG and IMT primitives as a
function of tree depth. PHDAG (blue) is depth-invariant at
$76{,}276$ gas. IMT (red) grows approximately linearly with depth,
with per-insert variance (error bars) also growing in absolute
terms. The crossover between primitives under the sweep's workload
falls at depth $6$ (green band), slightly below the lifetime-average
$d^* \approx 7.2$ (Section~\ref{section:crossover}). Each IMT point
aggregates $\max(d, 2)$ appends per depth with the first insert
excluded to remove cold-cascade contamination.}
\label{fig:scaling}
\end{figure}

The empirical crossover depth, the value of $d$ at which
$\bar{g}_{\mathrm{IMT}}$ first exceeds $\bar{g}_{\mathrm{PHDAG}}$,
is located at $d = 6$ in the sweep
($\bar{g}_{\mathrm{IMT}} \approx 71{,}000$ at $d = 5$, rising to
$\approx 78{,}000$ at $d = 6$). As discussed in
Section~\ref{section:crossover}, this as-sampled crossover sits
slightly below the lifetime-average value $d^* \approx 7.2$ because
the sweep samples write-heavy low indices; both lie far below
production depths. For all depths $d \geq 6$ the PHDAG primitive is
strictly cheaper per append under the sweep's workload.

\subsection{Variance Scaling}

The error bars in Figure~\ref{fig:scaling} encode the empirical
per-insert standard deviation at each depth. IMT variance grows
monotonically with depth, consistent with the $\Theta(\sqrt{d})$
scaling of Theorem~\ref{thm:cost}. The absolute magnitude of the
sweep's per-depth standard deviation is, however, smaller than the
uniform-index prediction $|c_L - c_R|\sqrt{d}/2$: at $d = 10$ the
uniform-index formula gives $\approx 8{,}200$ gas, while the sweep's
sample standard deviation is $\approx 3{,}500$ gas. This is the same
sampling effect that governs the mean (Section~\ref{section:agreement}):
the sweep measures only small leaf indices, which span a narrow band
of Hamming weights and therefore a narrow band of write-level counts
$W$, under-representing the full popcount range that drives the
uniform-index variance. The sweep confirms the \emph{sign and
functional form} of the variance scaling (monotone growth in $d$)
but not its uniform-index magnitude, which would require sampling
the full leaf-index space at each depth.

PHDAG variance, by contrast, is invariant to depth at
$\sigma_{\mathrm{PHDAG}} \approx 6$ gas across all measurements,
consistent with the zero-variance prediction of
Theorem~\ref{thm:phdag} modulated only by the discrete $12$-gas
calldata quantum described above.

\subsection{Agreement with the Stochastic Model}
\label{section:agreement}

We estimate the per-level cost constants of Theorem~\ref{thm:cost}
directly from the proportional depth sweep (Section~\ref{section:methodology}):
the $326$ measured inserts across depths $1$--$25$, less the $25$
first-inserts excluded for the cold-allocation cascade, giving
$n = 301$. For each measured insert we
compute the write-level count $W = d - \mathrm{pop}(i)$
(Lemma~\ref{lemma:writes}) from its depth and leaf index, and fit
the cost model
\[
  g \;=\; c_0 + c_L\,W + c_R\,(d - W)
\]
by ordinary least squares (OLS). Because $W$ and $(d - W)$ sum to
$d$, the model is equivalently written in the cleanly identified
form $g = c_0 + c_R\,d + (c_L - c_R)\,W$, separating an intercept,
a per-depth cost $c_R$, and a per-write premium $c_L - c_R$; we
report all three coefficients together with their standard errors.
The first insert at each depth, which pays the one-time
zero-to-nonzero cold-allocation cascade as the frontier slots are
populated for the first time, is handled either by exclusion or by
an additive indicator term; the two treatments yield coefficients
agreeing to within one gas, confirming that the cascade is
separable from the steady-state per-level costs.

The excl-first-insert fit ($n = 301$) gives
\[
  c_0 = 37{,}823 \ (\mathrm{SE}\ 23), \quad
  c_L = 7{,}952 \ (\mathrm{SE}\ 1.2), \quad
  c_R = 2{,}748 \ (\mathrm{SE}\ 7.9),
\]
with coefficient of determination $R^2 = 0.999994$ and root-mean-square
residual $113$ gas on cost values spanning $38{,}635$ to
$253{,}706$ gas. The per-write premium is $c_L - c_R = 5{,}204$
(SE $8.2$). The near-exact fit, with a structurally simple
three-parameter model, confirms both the level-type decomposition
of Theorem~\ref{thm:cost} and the constancy of the per-level costs
across the full depth range.

The calibrated constants give a uniform-index mean slope of
$(c_L + c_R)/2 \approx 5{,}350$ gas per depth level
(Equation~\ref{eq:meang}). The slope measured directly from the
sweep is higher, at $\approx 7{,}700$ gas per depth level, and the
difference is a sampling effect rather than a discrepancy in the
model. The sweep populates each tree with small leaf indices
$i = 1, 2, 3, \ldots$, whose low Hamming weight gives an
above-average write-level count $W = d - \mathrm{pop}(i)$; the
measured inserts are therefore write-heavy relative to the uniform
mean $\mathbb{E}[W] = d/2$, and the more expensive write levels
($c_L > c_R$) are over-represented. Reweighting the per-level
costs by the sweep's empirical write fraction reproduces the
measured slope; equivalently, evaluating the fitted model at the
sweep's actual $W$ values reproduces every per-depth mean to within
the residual scale above. The uniform-index slope $5{,}350$ and the
as-sampled slope $7{,}700$ are the same model under two index
distributions.

\section{Discussion}
\label{section:discussion}

\subsection{Regimes of Applicability}

The crossover at $d \approx 6$--$7$ (Section~\ref{section:crossover})
divides the design space into two regimes. The crossover is a narrow
band rather than a single depth (the as-sampled $6$ to the
lifetime-average $7.2$); the two-regime split below brackets this
band, and every production registry surveyed sits far into the deep
regime regardless of which edge is taken.

\textbf{Shallow regime ($d \lesssim 6$).} IMT is more gas-efficient
per append and additionally provides succinct $O(\log N)$ inclusion
proofs, a property PHDAG does not provide without additional
structure. For applications requiring membership proofs with
bounded capacity, IMT is the correct primitive. At the floor of
the IMT range, $d = 1$, an append writes at most one
\texttt{filledSubtrees} slot plus the unconditional \texttt{root}
and \texttt{nextIndex} slots: two to three SSTOREs total against
PHDAG's three cold SSTOREs plus a warm counter write, which
explains the cost gap at the shallowest depths.

\textbf{Deep regime ($d \gtrsim 7$).} PHDAG is strictly cheaper per
append, and additionally exhibits negligible per-insert variance.
This is the regime occupied by every production append-only
on-chain registry we have surveyed: Tornado Cash ($d = 20$),
Semaphore ($d = 20$--$32$), and the major
zero-knowledge rollup state commitments ($d = 32$--$40$). In this
regime the choice between primitives is governed by whether the
application requires succinct inclusion proofs.

\subsection{The Role of Inclusion Proofs}

The comparison in this paper addresses append cost only. IMT's
$O(\log N)$ inclusion proof is a valuable property not shared by
PHDAG: a Merkle inclusion proof~\cite{Laurie13} is a sequence of
$d$ sibling hashes that can be verified by a contract or off-chain
verifier without reference to the remainder of the tree. ZK rollups and
privacy protocols exploit this property heavily, and for these
applications the depth-dependent append cost is accepted as a
necessary consequence.

Provenance registries, by contrast, typically require different
queries: parent-chain traversal, descendant enumeration, and
direct existence checks by unique identifier. The $O(|V|)$
off-chain reconstruction guarantee of Proposition~\ref{prop:reconstruct}
supports all of these queries natively in linear time in the
registry size, with no additional on-chain structure required.
The architectural match between query pattern and primitive is
the determining factor in primitive selection.

\subsection{Variance and Operational Predictability}

For production operators providing append services on behalf of
end users, as is typical in operator-gated
registries~\cite{Moore26Trustless}, per-insert cost
predictability is an operational requirement distinct from average
cost. An operator quoting a fixed fee per append must either
over-charge (to cover worst-case cost) or accept margin
compression on the expensive tail of the distribution. PHDAG's
near-zero variance eliminates this consideration entirely. IMT's
variance, while small in relative terms (CV decreases with depth),
grows in absolute terms and at depth $25$ is a non-trivial
fraction of the per-insert cost.

The practical implication is that IMT-based registries operating
at production depth must budget for the upper tail of the
per-insert distribution, not the mean. For an operator submitting
$10^5$ appends per day at depth $25$, a plausible volume for
a large ZK rollup or privacy protocol, the cumulative variance
of the daily gas budget is non-trivial. PHDAG-based registries do
not incur this budgeting uncertainty.

\subsection{Trustless Reconstruction as a Robustness Property}

Proposition~\ref{prop:reconstruct} formalizes a property that, in
practice, determines the \emph{indestructibility} of a provenance
registry. If every off-chain indexing service associated with a
PHDAG-based registry were to disappear simultaneously, a complete
reconstruction of the registry could be performed by any party
with read access to the blockchain in time linear in the registry
size. No trust in any single party is required; no off-chain
infrastructure is required; no coordination among users is
required. The registry survives as long as the underlying
blockchain survives.

This robustness derives from the log-canonical architecture of
Section~\ref{section:tiers}: event logs are attested through the
receipts root of every block header, so their integrity is
verifiable from chain headers alone, independent of any contract's
current storage state or any off-chain index. The insertion event
in PHDAG carries the complete metadata required for reconstruction,
which is the architectural property that makes the indestructibility
claim rigorous rather than rhetorical. For PHDAG this property is
architectural: it follows from the log-canonical design by
construction (Proposition~\ref{prop:reconstruct}), independent of
any implementation choice. For IMT it is at most incidental. An
IMT-based registry can recover the same guarantee only by emitting
sufficient payload in its insertion events, that is, by replicating
in its auxiliary log the very state its storage-canonical design
treats the Merkle root as carrying. The property is therefore native
to one primitive and bolted on to the other, and the difference
traces directly to where each places its canonical history
(Section~\ref{section:tiers}).

\subsection{Implications for Standardization}

The PHDAG primitive's simplicity and depth-invariance make it a
natural candidate for standardization as an Ethereum Improvement
Proposal (EIP) defining a uniform interface for append-only
provenance registries. A standardized PHDAG interface would enable
interoperable provenance infrastructure across the ecosystem
without requiring each application to reimplement storage layout,
event schema, or reconstruction logic. The constant-cost guarantee
of Theorem~\ref{thm:phdag} and the linear-time reconstruction
guarantee of Proposition~\ref{prop:reconstruct} are both
implementation-agnostic and therefore suitable for inclusion in
such a standard. We leave the precise specification of the
proposed EIP to future work.

\subsection{Provenance Traces as a Coordination Medium}
\label{section:coordination-discussion}

The preceding subsections frame the choice between PHDAG and IMT
in terms of append cost, variance, and inclusion proofs. A
complementary framing is available, and it is the one that most
directly explains why PHDAG suffices for provenance applications
despite lacking succinct membership proofs. PHDAG is a
\emph{coordination medium} in the sense introduced in
Section~\ref{section:coordination-medium}: an append-only ledger
whose insertion events are read by an ecosystem of autonomous
agents (indexers, derivative-work registries, compliance tools,
AI provenance verifiers) which then act on those traces. In the
pattern catalogue of~\cite{Paredes26}, this is the
\emph{Event-Signal} pattern at its strongest, natural for a
log-canonical primitive whose insertion events \emph{are} the
canonical history rather than auxiliary notifications. Each PHDAG
append emits a single event carrying the complete metadata
required for reconstruction
(Proposition~\ref{prop:reconstruct}); the event is the trace, the
agents are the consumers of provenance, and the activation
predicate is application-specific (a new descendant of a known
artifact, a manifest matching a query, a tree whose root has been
flagged).

This reframes the inclusion-proof question. ZK rollups and privacy
protocols require succinct membership proofs because their agents
are on-chain contracts that must verify membership within a single
transaction's gas budget. Provenance registries are consumed by a
different agent population: off-chain readers whose activation
predicates run over the full event stream, not within an in-circuit
verification. The $O(\log N)$ inclusion proof that motivates IMT is
the wrong primitive for this consumer profile; the right primitive
is a complete, authenticated, linear-time reconstructible event
log, which is precisely what PHDAG provides by construction. The
architectural advantage is not that PHDAG could be augmented with
succinct proofs; it is that the consumer population does not need
them.

A second consequence concerns the persistence of traces. Paredes
Garc\'ia~\cite{Paredes26} observes that biological stigmergic media
decay (pheromones evaporate), while ledger state does not; that
work treats the asymmetry as a design consideration with both costs
(stale traces persist, polluting the predicate space) and benefits
(the medium is permanently legible). Provenance registries occupy
the extreme end of this spectrum: the indestructibility of the
trace is itself the value proposition. A retracted or expired
anchor would undermine the core claim of an immutable provenance
record. PHDAG's design, append-only with no decay mechanism and
reconstructible from public event logs in $O(|V|)$ time, aligns
with this requirement by construction. The properties that would
be liabilities in a task-board contract (no expiry, no garbage
collection, unbounded state growth) are exactly what a provenance
medium requires.

\section{Related Work}
\label{section:related}

\textbf{Incremental Merkle trees.} The \texttt{MerkleTreeWithHistory}
contract of Tornado Cash~\cite{TornadoCash19} established the
canonical production IMT pattern. Semaphore~\cite{Semaphore20}
adapted the pattern for identity commitment sets. The major
zero-knowledge rollups (zkSync~\cite{ZkSync22},
Scroll~\cite{Scroll23}, Linea~\cite{Linea23}) employ sparse Merkle
tree variants at depths $32$--$40$ as their state commitment
mechanism. Our analysis applies to the frontier-update pattern
(as defined in Section~\ref{section:imt-primitive})
common to these systems; sparse Merkle trees with account-keyed
access patterns exhibit additional cost variation not modeled
here.

\textbf{Parent-hash provenance.} The underlying parent-reference append
pattern, in which each appended element carries a pointer to a prior
element and is registered in unique-keyed storage, is most fully
developed as the data-structural core of the AnchorRegistry
provenance registry of Moore~\cite{Moore26Trustless}. That work
embeds the pattern within a dual-layer cryptographic commitment
scheme addressing operator trust, and asserts an $O(1)$ append
complexity at a high level in its implementation section. The
present paper isolates the pattern from its application context,
names it the parent-hash directed acyclic graph (PHDAG) primitive,
formalizes its complexity with explicit constants, and places it
in direct comparison with IMT within a unified stochastic cost
analysis.

\textbf{Gas cost analysis.} Prior work on EVM gas cost has focused
primarily on deterministic cost modeling for
optimization~\cite{Albert18,Grech18} and on the economic design of
gas markets~\cite{Rou20,But19}. The stochastic treatment of
per-insert cost as a random variable parameterized by usage
patterns, to our knowledge, has not been presented in prior work.

\textbf{Ledger-state coordination.} Paredes
Garc\'ia~\cite{Paredes26} develops ledger-state stigmergy as a
formal framework for indirect coordination of autonomous agents
through shared ledger state, introducing a state-transition
formalism and an applied pattern catalogue (State-Flag,
Event-Signal, Threshold-Trigger, with a Commit-Reveal sequencing
overlay). That work is medium-general and uses a task-board as its
worked example; the present paper applies its vocabulary in
Sections~\ref{section:coordination-medium}
and~\ref{section:coordination-discussion} to characterize PHDAG as
an Event-Signal coordination medium for provenance traces.

\textbf{Certificate Transparency.} On-chain CT-style
logs~\cite{Laurie13} employ similar append-only patterns but
typically operate on-chain only as a settlement layer, with the
bulk of the log maintained off-chain. Our analysis applies
directly to any variant that performs per-append on-chain work.

\section{Conclusion}
\label{section:conclusion}

We have presented both a formal and a stochastic cost analysis of
the two dominant append-only on-chain registry primitives: the
incremental Merkle tree and the parent-hash directed acyclic graph.
The formal analysis establishes that PHDAG append is $O(1)$ in
both global registry size and any notion of tree depth, with an
explicit constant bound derived from the EVM gas schedule. The
stochastic analysis yields closed-form expressions for the mean
and variance of IMT per-insert cost, with the variance growing as
$\Theta(\sqrt{d})$ in absolute terms and decreasing as
$\Theta(1/\sqrt{d})$ in relative terms.

We validated both analyses empirically on Base Sepolia across
tree depths $1$--$25$. The PHDAG primitive was observed to cost
$76{,}276$ gas per append with a standard deviation of
approximately $6$ gas, consistent with the constant bound
predicted by the formal analysis. The IMT primitive was observed
to grow linearly with depth, with the crossover between primitives
at $d^* \approx 7$ for a uniform leaf index and at $d = 6$ under the
depth sweep's workload. Every production-deployed append-only
registry surveyed operates at a depth above this crossover, placing
all such systems in the regime where PHDAG is strictly cheaper and
where per-append cost is fully predictable.

A separate structural property, $O(|V|)$ trustless
reconstruction from public event logs alone, ensures that
PHDAG-based registries survive the failure of any off-chain
infrastructure. Constant-cost append, depth-invariant variance,
and trustless reconstruction together make PHDAG the preferred
primitive for append-only registries operating at production
depth, where succinct membership proofs are not the dominant
query, the regime every production registry surveyed occupies,
and a strong candidate for standardization as a uniform registry
interface for the Ethereum ecosystem.

The stochastic framing suggests several directions for future
work. The depth sweep measured a write-heavy sub-sample of the
leaf-index space; a sweep over uniformly sampled leaf indices at
each depth would directly confirm the uniform-index variance
magnitude predicted by Theorem~\ref{thm:cost}, which the present
experiments confirm only in functional form. Extension of the
analysis to sparse Merkle tree variants used in zero-knowledge
rollups would characterize the cost profile of those systems' state
commitment mechanisms.
Finally, the standardization of the PHDAG primitive as a uniform
interface for append-only provenance infrastructure, through an
Ethereum Improvement Proposal, would enable interoperable
provenance systems across the ecosystem.

\section*{Acknowledgments}

The authors thank the Base team for operating a reliable L2
testnet on which this work was conducted. The authors used AI
writing assistance (Anthropic Claude) in drafting this manuscript.
All theoretical contributions, experimental designs, and empirical
analyses are the authors' own. The reference implementations, raw
experimental data, and analysis scripts are available
at~\cite{AnchorResearch26}.

\end{document}